\newtheorem{definition}{Definition}
\newtheorem{claim}{Claim}
\newcommand{\E}{\mathbb{E}}
\renewcommand{\P}{\mathbb{P}}
\newcommand{\eps}{\epsilon}
\begin{document}
\title{Centralized Versus Distributed Routing for Large-Scale Satellite Networks}
\author{Rudrapatna Vallabh Ramakanth and Eytan Modiano \thanks{R.V. Ramakanth and E. Modiano are affiliated to the Laboratory of Information and Decision Systems (LIDS) at the Massachusetts Institute of Technology (MIT), Cambridge, MA 02139, USA. Email: \{vallabhr, modiano\}@mit.edu.}}

\maketitle

\begin{abstract}
An important choice in the design of satellite networks is whether the routing decisions are made in a distributed manner onboard the satellite, or centrally on a ground-based controller. We study the tradeoff between centralized and distributed routing in large-scale satellite networks. In particular, we consider a centralized routing scheme that has access to global but delayed network state information and a distributed routing scheme that has access to local but real-time network state information. For both routing schemes, we analyze the throughput and delay performance of shortest-path algorithms in networks with and without buffers onboard the satellites. We show that distributed routing outperforms centralized routing when the rate of changes in the network link state is comparable to the inherent propagation and transmission delays. In particular, we show that in highly dynamic networks without buffers, the distributed scheme achieves higher throughput than a centralized scheme. In networks with buffers, the distributed scheme achieves lower delays.
\end{abstract}

\section{Introduction}
Recently, there has been a growing interest in deploying large-scale satellite networks for data communication. These networks usually comprise Low-Earth Orbit (LEO) constellations or Medium-Earth Orbit (MEO) constellations that are designed to be dense and regular in structure, resembling a grid-like structure. 
For data communication purposes, designing a good routing strategy for such large-scale networks is a challenging task. The underlying routing scheme of a network plays a critical role in its performance. In the context of satellite networks, an important design choice is whether routing decisions should be made on-board the satellites, or on a ground-based central controller. 

It is not surprising that most satellite operators today route packets through their networks using centralized routing approaches, as these methods provide tighter control and troubleshooting from the ground \cite{fraire_centralized_2021}. In addition, a centralized algorithm is easier to implement and requires less resources onboard satellites. However, in such a paradigm, the controller cannot react to instantaneous changes in the network state, such as congestion effects or link failures. This is because there is an innate unavoidable delay in propagating, aggregating, and processing network state information from satellites to the ground (and vice versa), which ranges in the order of tens of milliseconds or longer. 

On the other hand, there is also a strong appeal to use distributed routing for satellite networks. A distributed routing strategy enables satellites to react to instantaneous network changes in their neighborhood while making routing decisions. Distributed routing policies for satellite networks that exploit the density and structure of these networks have been proposed \cite{ekici_distributed_2001, page_distributed_2023, liu_low-complexity_2015, taleb_sat04-3_2006, deng_distance-based_2023}. 
However, such a scheme would incur additional infrastructure costs for route computation and buffering onboard the satellite. Moreover, even though it is conceptually possible for satellites to obtain global state information for routing, it is impractical because of limited processing and storage capabilities onboard the satellites, along with the additional overhead required to transmit large amounts of network state information. This forces distributed algorithms to only use local information for routing.
Thus, a trade-off arises between centralized and distributed routing strategies. 

To that end, we seek to answer the following question: \textbf{under what network conditions does a distributed routing scheme outperform a centralized routing scheme in terms of throughput and delay in regular, yet dynamic networks?}

Routing is a well-studied problem. In most terrestrial networks, operators use distance-vector routing protocols or link-state routing protocols, both of which are relatively static shortest-path routing protocols \cite{bertsekas_data_1992}. These are not suitable for satellite networks, where the links change state on the same time scale as the transmission of a packet across the network \cite{ekici_distributed_2001}. There exist methods to compute optimal routing policies using flow-based models \cite{bertsekas_data_1992}. However, these policies are very sensitive to traffic conditions and network parameters, rendering them impractical \cite{bertsekas_data_1992}. On the other hand, while routing for dynamic \textit{ad hoc} wireless networks has been studied for a long time \cite{corson_distributed_1995, perkins_ad-hoc_1999, johnson_dynamic_1996, karp_gpsr_2000}, they are not suitable for satellite networks, as they would disregard much of the regular structure. Of course, there exists a plethora of prior work on studying various routing strategies for satellite networks \cite{modiano_efficient_1996,werner_dynamic_1997, ekici_distributed_2001, neely_power_alloc_2003, jun_sun_capacity_2002, sun_routing_2004, taleb_sat04-3_2006, liu_low-complexity_2015, fraire_centralized_2021, wang_orbit-grid-based_2023, page_distributed_2023, deng_distance-based_2023, li_stable_2024, zheng_sdn_2024}. 
However, to the best of our knowledge, existing work does not analytically characterize the conditions under which a distributed (or centralized) routing strategy yields better performance.

In this work, we consider routing packets across a grid-like network topology (see Fig. \ref{fig:network-model}) with dynamic links that evolve in time according to a Markov process. Finding an optimal route in such a network requires the solution to a dynamic program (Markov Decision Process) with a state space that grows exponentially with the network size \cite{akrida_how_2020_journal}. Since the solution to such a dynamic program is computationally intractable, we focus on practical and representative shortest-path routing strategies. In particular, for the centralized case, we consider Shortest Connected Path Routing policy that selects the shortest path from the source to the destination \textit{only using links that are observed to be available for use}. For the distributed case, we consider a Greedy Routing policy, that routes packets only along links that take it closer to the destination at every step. Note that both algorithms are types of ``shortest-path algorithms" and are loop-free.

The main contribution of this work is in characterizing when distributed routing outperforms centralized routing. In particular, when the rate of change of link states is comparable to the inherent propagation and transmission rates of the network, we show that in networks without onboard satellite buffering, distributed routing generally achieves a higher throughput, and in networks with onboard buffering, distributed routing achieves a lower delay. However, in a network where rate of change of link states is gradual, the centralized scheme has lower delay and higher throughput in networks with and without buffers respectively.

The rest of the paper is organized as follows. In Section \ref{sec:system_model}, we describe the model of the satellite network, definitions of the routing schemes and the performance metrics in detail. Section \ref{sec:centralized_routing_analysis} analyzes the performance of a practical centralized routing algorithm. Similarly, Section \ref{sec:distributed_routing_analysis} analyzes the performance of a practical distributed routing algorithm. In Section \ref{sec:comparison}, we compare the results of the analysis in the former sections to understand the conditions favorable for centralized routing and decentralized routing. Finally, we conclude the paper in Section \ref{sec:conclusion}.

\section{System Model and Preliminaries}\label{sec:system_model}

A satellite network has a lot of dynamic connections. The topology of the network evolves in the order of minutes, whereas routing over the network happens in the order of milliseconds. For this reason, the satellite network is modeled as a collection of snapshots with a fixed topology \cite{werner_dynamic_1997}. We are interested in the routing of packets in a given snapshot of the satellite network. We will now describe our model for the satellite network.

\subsection{Satellite Network Model}
\begin{figure}
\centering\includegraphics[width=0.3\textwidth]{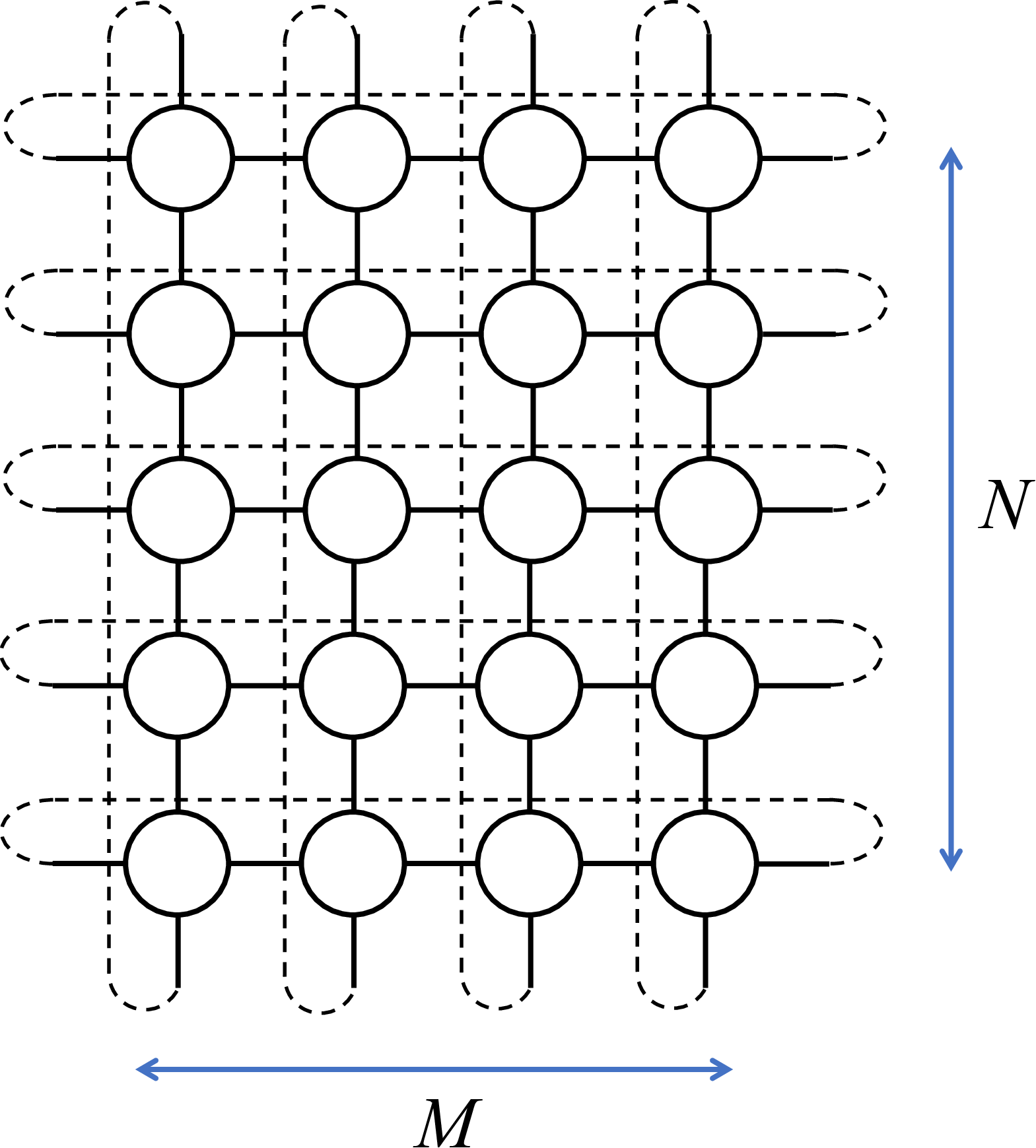}
    \caption{Toroidal Mesh grid of size $N \times M$, where $N = 5$ and $M = 4$. Each node has degree 4.}
    \label{fig:network-model}
\end{figure}
Consider a toroidal mesh grid of size $N \times M$ (see Figure \ref{fig:network-model}). This models a particular snapshot of the satellite network. It is an abstract logical network model in which every satellite node is connected to its four nearest neighbors-- two links in the same orbital plane and two links which are cross-orbit \cite{sun_routing_2004, jun_sun_capacity_2002, handley_delay_2018}. Here, $N$ denotes the number of satellites per orbital plane and $M$ denotes the number of orbital planes. 
In this model, there are $NM$ nodes. Each node is labeled with a coordinate $(x,y)$, such that 
\begin{align}
    x &\in \left\{\left\lfloor-\frac{M}{2}\right\rfloor+1, ..., -1, 0, 1, ..., \left\lfloor \frac{M}{2} \right\rfloor\right\}, \text{ and }\\
    y &\in \left\{\left\lfloor-\frac{N}{2}\right\rfloor+1, ..., -1, 0, 1, ..., \left\lfloor \frac{N}{2} \right\rfloor\right\}.
\end{align}
 We divide the duration of the snapshots into discrete time slots, where each time slot is the time it takes to transmit a fixed-size packet from one satellite node to its neighbor. For simplicity, we assume that the time taken by a fixed-length packet to be transmitted across any link in this network is the same and equal to one time slot. Henceforth, we measure time in the units of slots. We consider a directed network, where the link from node $(x_1,y_1)$ to $(x_2, y_2)$ is different from the link from $(x_2,y_2)$ to $(x_1, y_1)$.
The links in this network are modeled to be random, with dynamics that evolve according to a Markov Process. The state of link $l$ at time $t$ is given by
\begin{align}
    l(t) = \begin{cases}
        1 & \text{Link $l$ is available for use at time slot } t\\
        0 & \text{otherwise}.
    \end{cases}
\end{align}
In other words, link $l$ is ON at time $t$ if $l(t) = 1$ and it is OFF at time $t$ if $l(t) = 0$.
\begin{figure}
    \centering
    \includegraphics[width=0.7\linewidth]{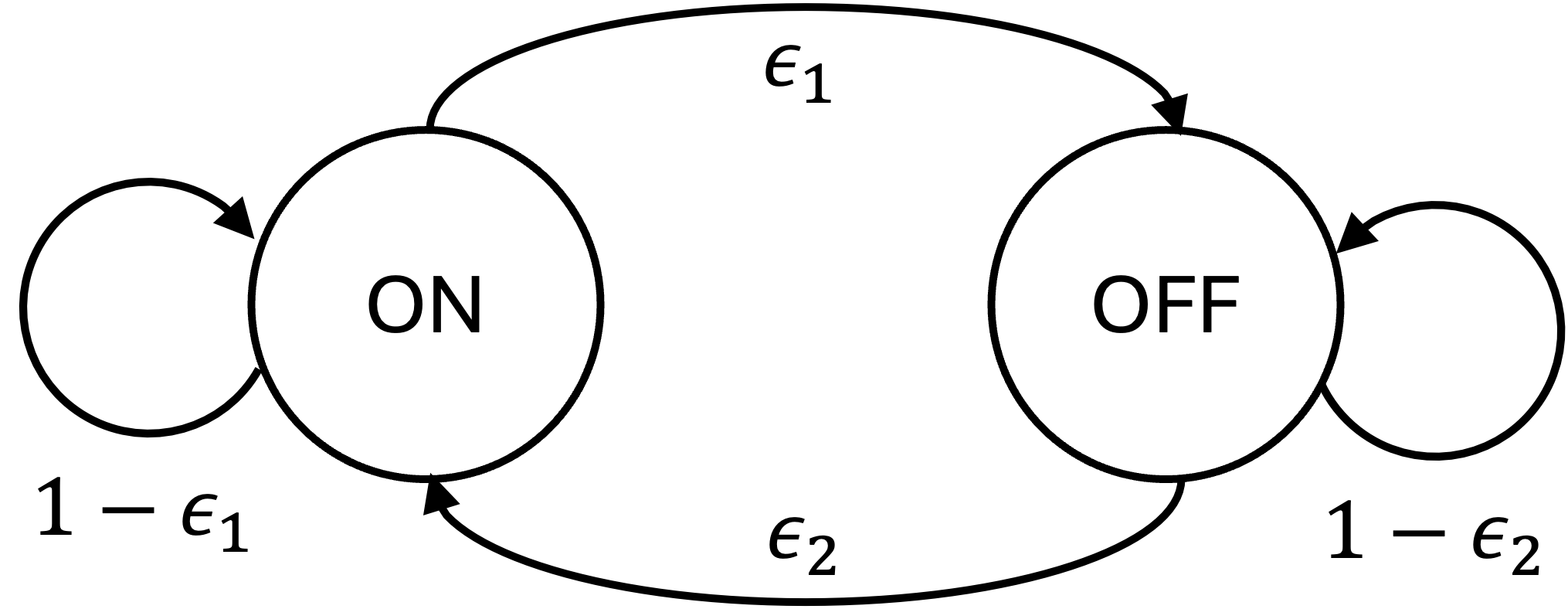}
    \caption{Markovian Link dynamics of $l(t)$}
    \label{fig:markov}
\end{figure}
 The distribution of $l(t)$ follows Markovian dynamics, as depicted by Figure \ref{fig:markov}. In particular,
\begin{equation}\label{eqn:markov_model}
\begin{aligned}
    \P(l(t) = 1 \;|\; l(t-1) = 0) &= \eps_2\\
    \P(l(t) = 0 \;|\; l(t-1) = 0) &= 1-\eps_2\\
    \P(l(t) = 1 \;|\; l(t-1) = 1) &= 1-\eps_1\\
    \P(l(t) = 0 \;|\; l(t-1) = 1) &=\eps_1
\end{aligned}.
\end{equation}
We assume that the state dynamics of each link are independent of the other for simplicity. However, even with this assumption, we can still model many phenomena over satellite links. 
For example, a link state $l(t)$ can model whether a queue buffer between two nodes is at capacity or not at capacity, hence capturing whether a link is congested or not. A second example would be when $l(t)$ models random optical link failures over time, where a failed optical link will take some random amount of time to be set up again and a functional optical link will remain functional for some random duration. Another example is when the satellite links are allocated based on user priority, where the link capacities are mainly reserved for the primary users and the unused capacity is allocated to secondary users \cite{handley_delay_2018}. The dynamics of the availability of capacity to secondary users could be captured by Markovian dynamics.

The link state transition probabilities $\eps_1$ and $\eps_2$ model how fast the link state changes compared to the time it takes to transmit a fixed-size packet from one node to the other (time slot). 
The steady-state probability of the link being ON is $p$ and it is given by,
\begin{align}
    p := \P(l(t) = 1) = \frac{\eps_2}{\eps_1 + \eps_2}.
\end{align}
Consequently, the probability of the link being OFF is $(1-p)$. We will also be interested in the $k-$stage transition probabilities of the links, namely,
\begin{align}
    \P\left(l(t+k) = j \;\bigg|\; l(t)=i \right) = p_{ij}(k).
\end{align}
We have the following expression for $p_{ij}(k)$,
\begin{equation}\label{eqn:transition_k}
\begin{aligned}
    p_{00}(k) &= \frac{\eps_1 + \eps_2(1-\eps_1-\eps_2)^k}{\eps_1 + \eps_2} = (1-p) + p\mu^k\\ 
    p_{01}(k) &= \frac{\eps_2 - \eps_2(1-\eps_1-\eps_2)^k}{\eps_1 + \eps_2}= p - p\mu^k \\
    p_{10}(k) &= \frac{\eps_1 - \eps_1(1-\eps_1-\eps_2)^k}{\eps_1 + \eps_2}= (1-p) -(1-p)\mu^k\\
    p_{11}(k) &= \frac{\eps_2 + \eps_1(1-\eps_1-\eps_2)^k}{\eps_1 + \eps_2}= p + (1-p)\mu^k,
\end{aligned}
\end{equation}
where $\mu := 1 - \eps_1 - \eps_2$ is the \textit{memory parameter} of the link. When the \textit{memory parameter} $\mu = 0$, the links have no memory (i.i.d. in time), and when $\mu \to 1$, the links will be static once realized. We assume that $\mu \geq 0$, i.e., the links have \textit{positive memory }\cite{johnston_wireless_2018}. This assumption ensures that a link that is ON is more likely to be ON after $k$ time slots, compared to the probability of a link that is OFF turning ON, i.e., $p_{11}(k) > p_{01}(l), \forall l$. This assumption allows for a controller with delayed network state information to make sensible routing decisions.

We are interested in routing between a source and a destination on a packet-by-packet basis. In a particular snapshot of the network, the satellite network forms a toroidal mesh grid (Figure \ref{fig:network-model}) over the Earth. The source and destination on the ground are generally served by the satellites closest to them, within their line of sight. To this end, routing a packet from the on-ground source to the on-ground destination in a particular snapshot essentially boils down to routing the message from the satellite node serving the source to the satellite node serving the destination over the dynamic satellite network. This is reasonable as it is fairly straightforward to precompute the satellite that would serve a geographic region (consequently, the on-ground source and destiantion) at any given time.

Hence, without loss of generality, we can fix the destination satellite node to have label $(0,0)$. Then, a satellite node $(x,y)$ is located $|x| + |y|$ hops away from the destination. It would take at least $|x| + |y|$ time slots to transmit a packet from $(x,y)$ to $(0,0)$. 
Without loss in generality, we  assume that $x > 0, y> 0$ and $|x| + |y| = x+y$ in the rest of the paper. This is due to the symmetry of the network along the vertical and horizontal axes.


\subsection{Routing Schemes}\label{sec:routing-schemes}
1) \textit{Centralized Routing Scheme}: 
In this scheme, we assume that a centralized router has delayed knowledge of the satellite links' state. This delay arises due to various factors, but it is primarily due to the time it takes for an electromagnetic wave to propagate from the satellite back to the Earth. Whenever a ground station (GS) needs to transmit a fixed-length packet to another GS through the satellite network, the centralized router finds a path for the packet using the network state information and appends the path onto the packet. Essentially, a \textit{virtual circuit} is established from the source GS to the destination GS through the satellite network. This form of routing is also sometimes referred to as \textit{virtual circuit routing} or \textit{layer 2 routing}.
Once the centralized router computes a path for the packet, the source GS forwards it to the first satellite (source satellite). One by one, the satellites decode the path appended with the packet, and forward it to the next satellite in the path, until the destination satellite is reached. The satellites have no onboard intelligence and only forward the packet to the next node according to the path appended in the header. The centralized router finds a path from source to destination with high probability, using stale global network state information. However, the links can change state after some time, due to their Markovian dynamics.

2) \textit{Distributed Routing Scheme}: In this scheme, every satellite has onboard routing intelligence and there is no centralized router. Every satellite knows the state of adjacent inter-satellite links in near real-time. However, the satellite router does not have global knowledge of all link states in real-time. Though it is possible to garner global network state information and use it for routing decisions, it leads to increased computation and communication complexity and is best avoided onboard satellites. This means that the router must make routing decisions with partial, yet real-time network state information.
The source ground station (GS) transmits its packet to the first satellite, and each satellite then decides the next satellite to which the packet must be forwarded in the network, until the destination GS is reached. This form of routing is also sometimes referred to as \textit{IP routing} or \textit{layer 3 routing}, as it involves route computation at each node.
\begin{figure}[!hbt]
    \centering
    \includegraphics[width=0.8\linewidth]{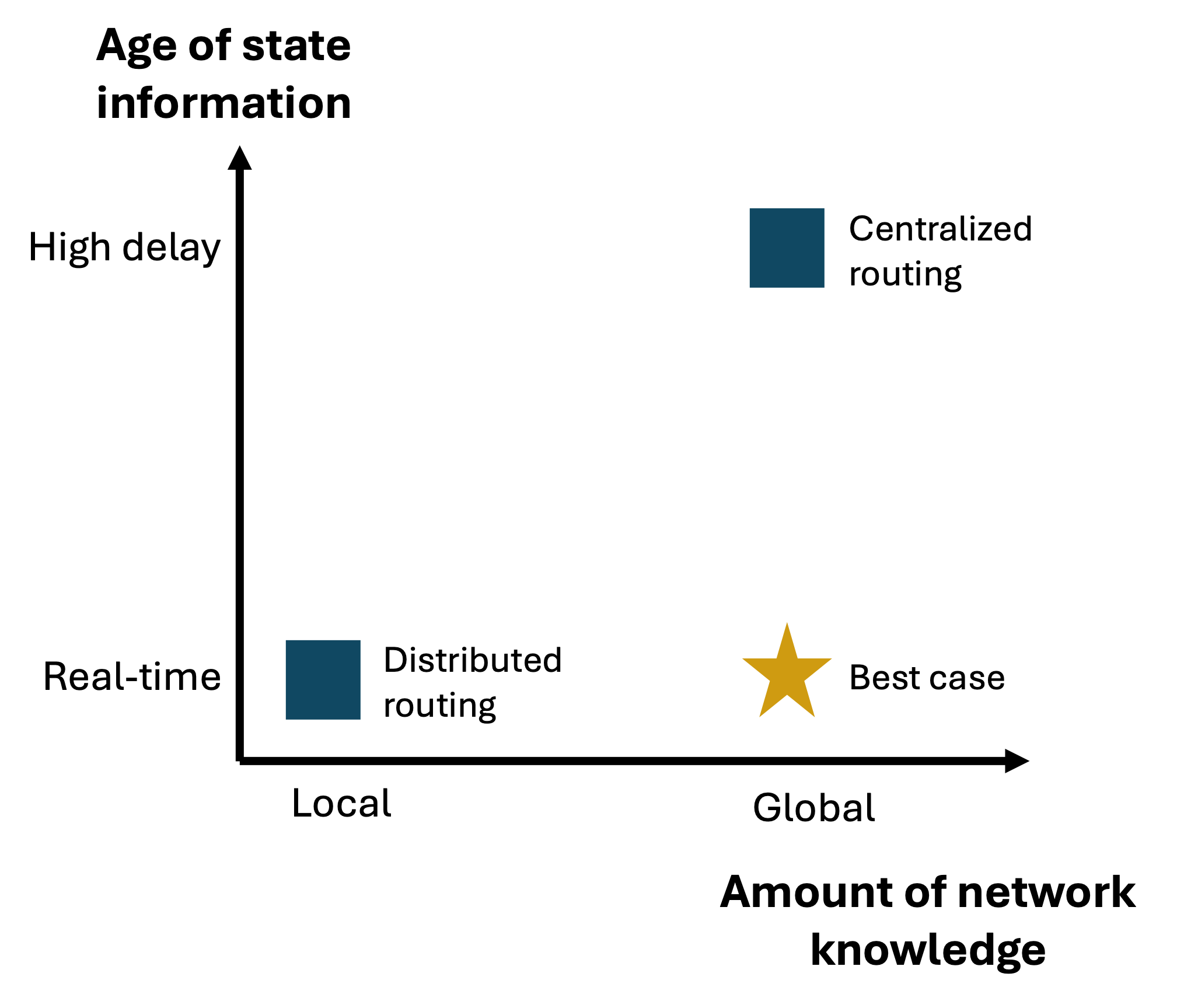}
    \caption{Tradeoffs of the Routing Schemes}
\label{fig:routing_tradeoff}
\end{figure}

Figure \ref{fig:routing_tradeoff} provides a high-level visual summary of the tradeoffs between the two routing schemes.

\subsection{Performance Metrics and Scenarios of Interest}\label{sec:scenarios-of-interest}
We are interested in two metrics, \textit{throughput} and \textit{delay}. We define throughput $T_{xy}$ as the probability of successfully routing a packet from the source $(x,y)$ to the destination $(0,0)$. 
We define the delay $D_{xy}$ as the average number of slots it takes to route a packet from source $(x,y)$ to destination $(0,0)$.
We compare the throughput and delay of the routing schemes in the network model defined above in two scenarios, 1) without onboard buffering and 2) with onboard buffering.

In the first scenario, the satellite nodes have no buffering capabilities. In other words, the packet in consideration is dropped by a satellite node whenever the respective outgoing link along the chosen path is in the OFF state. This means that the packet suffers no significant buffering or queueing delay along its path. Since the packet only experiences propagation and transmission delays, any routing policy would have nearly minimum possible delay in this scenario. However, the throughput will naturally suffer as it is likely that a packet will be dropped in transit. 
We are interested in routing policies that maximize throughput in this scenario.

In the second scenario, the satellite nodes have buffering capabilities. Consequently, the throughput is nearly maximum for any routing policy, as packets are rarely dropped. However, they can be significantly delayed due to buffering, while the packet waits for a link to become available again.  
We are interested in routing strategies that minimize the buffering delay in this scenario. 


There is an evident tradeoff between throughput and delay as visualized in Figure \ref{fig:throughput_delay}.
\begin{figure}
    \centering
\includegraphics[width=\linewidth]{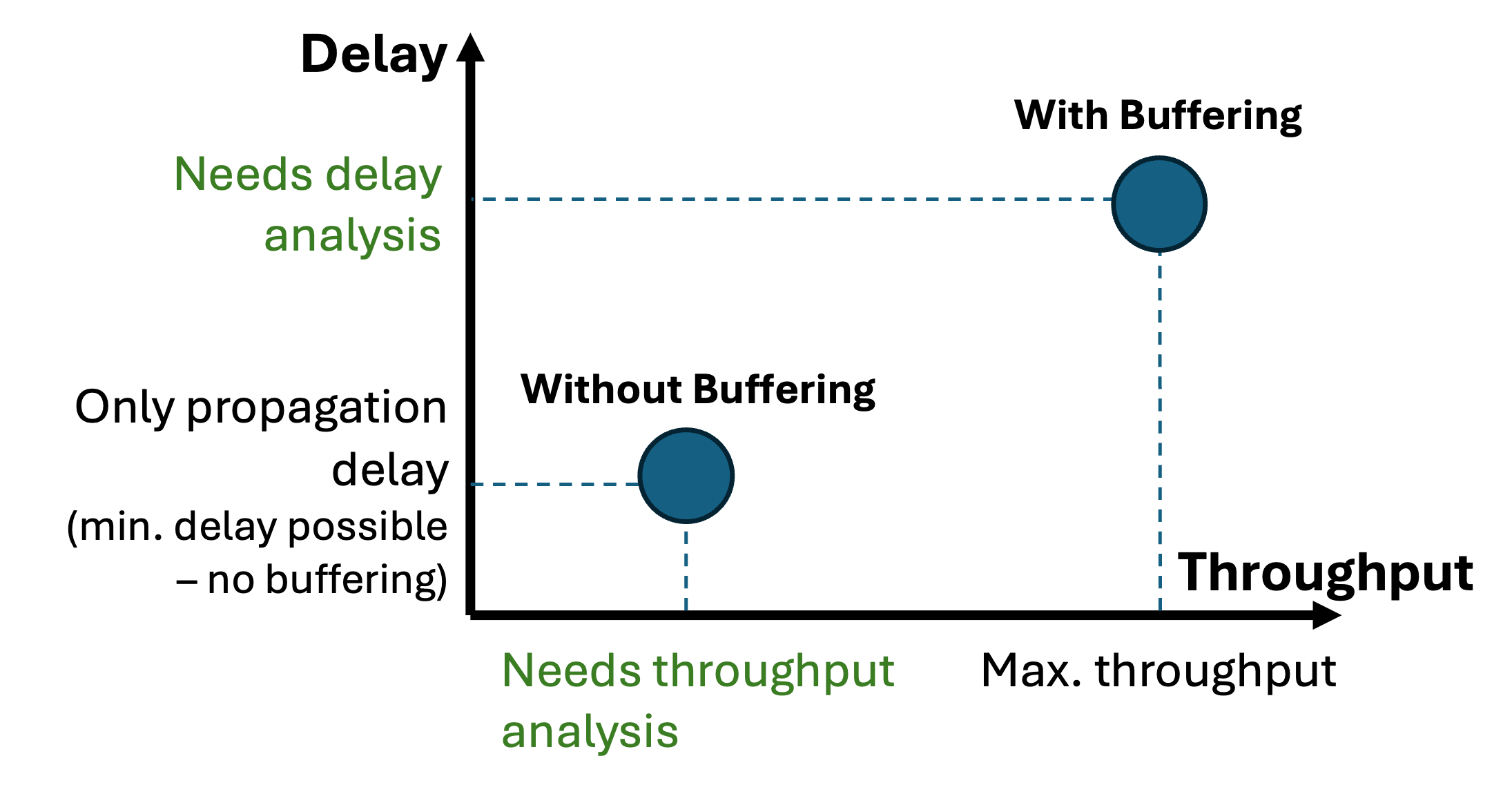}
    \caption{Pictoral representation of the two scenarios of interest.}
    \label{fig:throughput_delay}
\end{figure}
In the following sections, we analyze centralized and distributed routing schemes in the two scenarios.
\section{Analysis of a Centralized Routing Scheme}\label{sec:centralized_routing_analysis}
In this section, we introduce and motivate a simple centralized routing policy called the Shortest Connected Path Routing (SCPR). We will then analyze its performance without buffering and with buffering. 

Recall that in a centralized setting, the router gets global network state information after some inherent delay. 
Let parameter $t_c$ denote the amount of time since the network state information was most recently observed, until the packet reaches the source satellite. The parameter $t_c$ includes the propagation delays, path computation delay, as well as other delays in the system.

We will use $\gamma$ to denote a path from $(x,y)$ to $(0,0)$. The path $\gamma$ is a collection of links which connects $(x,y)$ to $(0,0)$. Now, we define the term ``connected" in the Shortest Connected Path Routing.
\begin{definition}[Connected Path]
A path $\gamma$ is \textit{connected} at time $t$ if all the links $l \in \gamma$ are ON at time $t$. In other words, $l(t) = 1$, $\forall l \in \gamma$. 
\end{definition}
The Shortest Connected Path Routing (SCPR) is a simple practical algorithm. The policy observes the network state at the current time $t$ and selects the shortest connected path from the source $(x,y)$ to the destination $(0,0)$. If there exists no connected path between $(x,y)$ and $(0,0)$, then it just selects one of the shortest paths at random. SCPR is visually summarized in Figure \ref{fig:scpr_routing}. 

SCPR is representative of the class of centralized policies since any such policy would have to make routing decisions on delayed network state information. 
As discussed in the introduction, the optimal centralized policy to maximize throughput or minimize delay requires solving a complex dynamic program, and SCPR is not always optimal.
However, it is possible to show that the shortest \textit{connected} path is throughput optimal if it also happens to be one of the shortest paths from $(x,y)$ to $(0,0)$. In fact, the shortest connected path between is also very likely to be a shortest path based on the numerical analysis in \cite{zhang_connectivity_2013}.


\begin{figure}
    \centering
\includegraphics[width=0.7\linewidth]{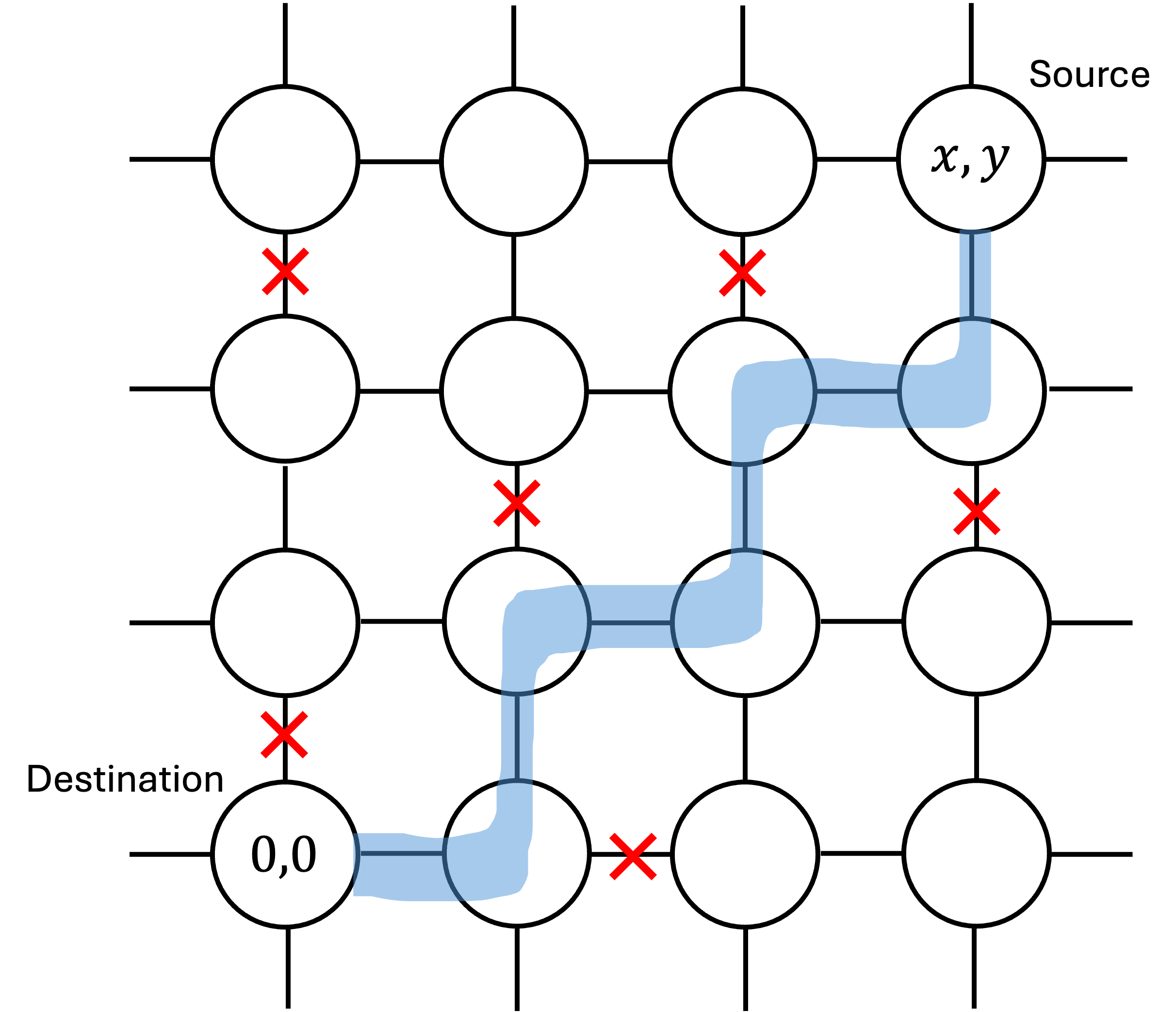}
    \caption{Given the network state at $t=0$, the SCPR policy gives us the shortest connected path from source to destination, as highlighted. Here, the red crosses denote links being unavailable for use.}
    \label{fig:scpr_routing}
\end{figure}

In the following  subsection,
we will analyze SCPR throughput in the case where satellites do not have onboard buffering capabilities.

\subsection{Shortest Connected Path Routing Without Buffering}\label{sec:centralized_routing_in_scenario_A}

The throughput for the SCPR policy in this scenario is given by the probability that there exists a path to the destination and the probability that all links are ON at the time they are traversed. Claim \ref{claim:centralized_throughput} states an upper bound for throughput under the SCPR policy. 

\begin{claim}[Throughput Under SCPR Policy]\label{claim:centralized_throughput} For the centralized SCPR policy, the throughput is upper bounded as
    \begin{equation}
    T^{SCPR}_{xy}\leq \prod_{i=0}^{x+y- 1} \left(p + (1-p)\mu^{t_c + i}\right).\label{eqn:scpr_throughput_approx}
\end{equation}
\end{claim}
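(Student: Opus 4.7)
The plan is to bound the success probability of a packet routed by SCPR by computing, for each possible path SCPR might select, the probability that every link along that path is ON at the time the packet traverses it, then take the maximum of those bounds. The key fact to exploit is the formula for the $k$-step transition probabilities $p_{11}(k) = p + (1-p)\mu^k$ and $p_{01}(k) = p - p\mu^k$ derived in \eqref{eqn:transition_k}, together with the independence of link dynamics across different links.

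First I would observe that SCPR outputs a path $\gamma$ whose length $L$ is at least $x+y$ (the minimum hop count from $(x,y)$ to $(0,0)$), regardless of whether a connected path was found or the fallback random shortest path was used. Next, I would condition on the chosen path $\gamma$ and on the observation at time $t=0$. The packet traverses the $i$-th link on $\gamma$ at time $t_c + i$, so conditional on $\gamma$ and the time-$0$ snapshot, the success probability is the product, over $i = 0, \dots, L-1$, of the probability that the $i$-th link is ON at $t_c + i$. Because the links evolve independently, this genuinely factors.

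For each individual link, that conditional probability is either $p_{11}(t_c+i) = p + (1-p)\mu^{t_c+i}$ (if the link was observed ON at $t=0$) or $p_{01}(t_c+i) = p - p\mu^{t_c+i}$ (if it was observed OFF), and since $\mu \geq 0$ and $0 \leq p \leq 1$, both quantities are at most $p + (1-p)\mu^{t_c+i}$. Hence
\begin{equation*}
\P(\text{success}\mid \gamma,\, \text{obs}) \;\leq\; \prod_{i=0}^{L-1}\bigl(p + (1-p)\mu^{t_c+i}\bigr).
\end{equation*}
Each factor lies in $[0,1]$ and $L \geq x+y$, so truncating the product to its first $x+y$ terms can only increase it, yielding the desired bound uniformly in $\gamma$ and in the observed snapshot. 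Averaging over the randomness in the observation and in SCPR's tiebreaking then gives the claim.

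The main (mild) obstacle is the case where no connected path exists and SCPR selects a random shortest path: here the chosen links need not have been observed ON at $t=0$, so one cannot simply quote $p_{11}(t_c+i)$ on every hop. This is why the argument has to dominate the OFF-case transition $p_{01}(t_c+i)$ by $p_{11}(t_c+i)$ before factoring. A secondary care point is that SCPR's choice of $\gamma$ is itself a function of the time-$0$ observation, which is why I condition on both $\gamma$ and the snapshot before invoking independence; the uniform pointwise bound above then passes through the outer expectation without further work.
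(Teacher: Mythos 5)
Your proof is correct and follows essentially the same route as the paper's: condition on the selected path, use independence across links to factor the success probability into per-link terms $p_{11}(t_c+i)$, and truncate the product to its first $x+y$ factors since each factor is at most $1$ and $|\gamma|\geq x+y$. The one place you are slightly more careful than the paper is the fallback case where no connected path exists and some chosen links were observed OFF at $t=0$; your step dominating $p_{01}(t_c+i)$ by $p_{11}(t_c+i)$ (valid since $\mu\geq 0$) closes a gap the paper leaves implicit in its equation for $T^{SCPR}_{xy}$.
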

We provide an upper bound because exactly characterizing  $T^{SCPR}_{xy}$ requires counting all cuts from source to destination in the graph which is provably NP-hard \cite{provan_complexity_1983}.
However, the upper bound is a good approximation to $T^{SCPR}_{xy}$, and is close to its true value as seen in Section \ref{sec:comparison}. It is interesting to note that the throughput decays exponentially with the distance between the source and destination, as well as the inherent delay, $t_c$ (for fixed $\mu$). 

\begin{proof}[Proof of Claim \ref{claim:centralized_throughput}]
Let the central router get network state information at $t = 0$ without loss in generality. The source satellite receives the packet from the source GS at time $t = t_c$, after the unavoidable delay $t_c$. 
Let $\gamma$ be the shortest connected path given the network configuration at $t=0$. Then, the throughput $T^{SCPR}_{xy}$ is the probability that every link in $\gamma$ remains ON when the packet reaches it. The probability that the $i^{th}$ link is ON when reached is $p_{11}(t_c + i)$ (as defined in \eqref{eqn:transition_k}).  Therefore,
\begin{align}\label{eqn:success_p}
    T^{SCPR}_{xy}= \sum_{j={x+y}}^\infty \prod_{i=0}^{j-1} p_{11}(t_c + i)\P(|\gamma| = j).
\end{align}
Here, $\P(|\gamma| = j)$ denotes the probability that the length of the shortest connected path from $(0,0)$ to $(x,y)$ is $j$, and $\P(|\gamma| = \infty)$ denotes the probability that there exists no connected path to the destination.

It is important to point that $\P(|\gamma| = j)$ is difficult to compute for large $N$ and $M$ because it would entail counting all possible graph cuts between the source and the destination, which is known to be NP-hard \cite{provan_complexity_1983}. Reference \cite{zhang_connectivity_2013} provides a numerical algorithm to compute the probability that $(x,y)$ is connected to $(0,0)$ through a shortest path in the grid.  Nevertheless, numerical analysis in  \cite{zhang_connectivity_2013} suggests that $(x,y)$ is connected to $(0,0)$ through a shortest path (of length $x+y$) with high probability for large values of $p$.  

However, we can compute a simple upper bound for the throughput $T^{SCPR}_{xy}$. Since $j \geq x + y$ and $p_{11}(\cdot) \leq 1$, we get, 
\begin{align*}
    T^{SCPR}_{xy} &\leq \sum_{j=x+y}^\infty \prod_{i=0}^{x+y- 1} p_{11}(t_c + i) \P(|\gamma| = j) \nonumber\\
    &= \prod_{i=0}^{x+y- 1} p_{11}(t_c + i) \underbrace{\sum_{j = x + y}^{\infty} \P(|\gamma| = j)}_{=1}\nonumber\\
    &=  \prod_{i=0}^{x+y- 1} \left(p + (1-p)\mu^{t_c + i}\right). 
\end{align*}
\end{proof}

\subsection{Shortest Connected Path Routing With Buffering}\label{sec:centralized_routing_in_scenario_B}
If we define $X_i$ as the delay incurred to traverse the $i^{th}$ link in $\gamma$, then, the total expected delay due to SCPR, $D^{SCPR}_{xy}$ is 
\begin{equation*}
    D^{SCPR}_{xy} = \E\left[\sum_{i=0}^{|\gamma|-1}X_i\right]. 
\end{equation*}
It is important to note that the distribution of $X_i$ depends on the delay incurred before reaching link $i$, i.e., $\sum_{k=0}^{i-1}X_k$. This makes the expression for the delay $D^{SCPR}_{xy}$ non-trivial to compute. Claim \ref{claim:centralized_delay} provides 
a recursive method to lower bound the mean delay of the SCPR policy.

\begin{claim}\label{claim:centralized_delay} For the centralized SCPR policy with buffers, the mean delay is lower bounded as
    \begin{equation}\label{eqn:centralized_delay}
    D^{SCPR}_{xy} \geq \frac{d}{dt}M_{x+y}(t)\bigg|_{t=0}
    \end{equation}
where, $$M_i(t)\triangleq \E\left[\mu^{t\sum_{k=0}^{i-1}X_k}\right]/\log \mu, $$  
and it satisfies the following recursive equation
$$M_i(t) = A(t)M_{i-1}(t) + B(t)M_{i-1}(t+1),$$
and
\begin{align*}
     A(t) &= \frac{p(1 - \mu^t)\mu^t + \eps_2\mu^{2t}}{1-(1-\eps_2)\mu^t}\\
     B(t) &= \frac{(1-p)\mu^{t_c + t}(1-\mu^t)}{1-(1-\eps_2)\mu^t}\\
     M_0(t) &= 1/\log(\mu).
 \end{align*}
\end{claim}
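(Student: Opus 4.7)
The plan is to reduce the bound to an identity about the cumulative travel time $T_i \triangleq \sum_{k=0}^{i-1} X_k$ through a one-step conditional generating function, and then to trim the actual (possibly longer) path down to the shortest-path length $x+y$ using a simple monotonicity argument. Concretely, I would first compute the one-hop conditional transform $\E[\mu^{tX_i}\mid T_{i-1}=s]$. Because SCPR chose a connected path, every link on $\gamma$ was observed ON at time $0$, so when the packet reaches link $i$ at time $t_c+s$ the link is ON with probability $p_{11}(t_c+s) = p + (1-p)\mu^{t_c+s}$ (giving $X_i=1$ and contribution $\mu^t$), and OFF with probability $p_{10}(t_c+s) = (1-p)(1-\mu^{t_c+s})$, in which case the wait until the next ON slot is geometric with parameter $\eps_2$ and $X_i=W+1$, contributing $\eps_2\mu^{2t}/(1-(1-\eps_2)\mu^t)$.

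The next step is algebraic: I would regroup the resulting expression so that $s$ appears only through an isolated $\mu^s$ factor. The key simplification I expect to use is the identity
\begin{equation*}
\mu^t \;-\; \frac{\eps_2\mu^{2t}}{1-(1-\eps_2)\mu^t} \;=\; \frac{\mu^t(1-\mu^t)}{1-(1-\eps_2)\mu^t},
\end{equation*}
which peels the $\mu^{t_c+s}$ cross-terms out cleanly and produces exactly $\E[\mu^{tX_i}\mid T_{i-1}=s] = A(t) + B(t)\mu^s$ with $A$ and $B$ as in the claim. Taking expectations over $T_{i-1}$ and dividing by $\log\mu$ then converts $\mu^{tT_{i-1}}$ and $\mu^{(t+1)T_{i-1}}$ directly into $M_{i-1}(t)$ and $M_{i-1}(t+1)$, yielding the recursion; the base case is immediate from $T_0=0$.

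The final step is to convert the recursion for $M_{x+y}$ into a lower bound on $D^{SCPR}_{xy}$. Differentiating $M_{x+y}(t) = \E[\mu^{tT_{x+y}}]/\log\mu$ at $t=0$ brings down a factor of $\log\mu$ that cancels the normalization and leaves $\E[T_{x+y}]$. The SCPR path satisfies $|\gamma|\geq x+y$ almost surely and every per-hop delay obeys $X_i\geq 1$, so pathwise $\sum_{i=0}^{|\gamma|-1}X_i \geq \sum_{i=0}^{x+y-1}X_i = T_{x+y}$, and taking expectations gives the claimed inequality.

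I anticipate two delicate points. The main obstacle is the algebraic verification that the conditional transform rearranges into the precise rational form $A(t)+B(t)\mu^s$ stated in the claim; a small sign or indexing slip here would break the recursion. A secondary concern is ensuring that the one-hop transform I derive is valid for each of the first $x+y$ hops regardless of the (random) total length of $\gamma$. This should follow from the per-link independence of the Markov chains and the fact that conditioning each realized link on "observed ON at $t=0$" yields a per-hop delay law that depends only on the cumulative arrival time $t_c+T_{i-1}$, not on how many further hops $\gamma$ contains.
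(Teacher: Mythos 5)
Your proposal is correct and follows essentially the same route as the paper: condition the per-hop transform $\E[\mu^{tX_i}\mid S_{i-1}]$ on the cumulative arrival time, use $p_{11}(t_c+s)$ and the geometric waiting time to obtain the $A(t),B(t)$ recursion, differentiate at $t=0$ to recover $\E[S_{x+y}]$, and truncate the (possibly longer) path to its first $x+y$ hops for the lower bound. Your algebraic identity $\mu^t-\eps_2\mu^{2t}/(1-(1-\eps_2)\mu^t)=\mu^t(1-\mu^t)/(1-(1-\eps_2)\mu^t)$ checks out and reproduces the paper's $A$ and $B$ exactly, so no gap remains.
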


Even though Claim \ref{claim:centralized_delay} provides a lower bound for the delay, it is a close approximation for the true mean delay of the SCPR policy for large values of $p$. This is also verified in Section \ref{sec:comparison} through simulation. The approximate value can be efficiently numerically computed using the recursive relation in Claim \ref{claim:centralized_delay}. 
We now provide the proof.
\begin{proof}[Proof of Claim \ref{claim:centralized_delay}]
Firstly, note that even when there exists no connected path between the source and destination, the mean delay along any of the shortest paths to the destination is finite as the every link on the path will be ON eventually and infinitely often with probability 1.
Using the lower bound $|\gamma| \geq x + y$,
one can show that 
\begin{equation*}
    D^{SCPR}_{xy} = \E\left[\sum_{i=0}^{|\gamma| - 2}X_i\right] \geq \E\left[\sum_{i=0}^{x+y-1}X_i\right].
\end{equation*}
When a packet reaches a satellite node and the next link along the path is OFF, then the node buffers the packet until the link becomes ON and available to use. If the link is OFF, the delay $X_i$ experienced by the packet along link $i$ is a geometric random variable (c.f. the Markov model as described in \eqref{eqn:markov_model}). The probability of link $i$ being ON or OFF clearly depends on the aggregate delay incurred before reaching link $i$, which is $t_c + \sum_{k=0}^{i-1}X_k$. 
In particular, $X_i$ behaves as follows:
\begin{align}
    X_i = \begin{cases}
        1 & \text{ if the $i^{th}$ link is ON when reached}\\
        1 + Z_i & \text{ if the $i^{th}$ link is OFF when reached}.
    \end{cases}
\end{align}
Note that $Z_i \sim Geometric(\eps_2)$ as discussed earlier. 
The packet reaches the $i^{th}$ link along the path after a delay of $S_i \triangleq \sum_{k=0}^{i-1}X_k$ from the source satellite. From this, we get
{\medmuskip=0mu
\begin{align}
    \P(\text{Link $i$ is ON when reached } | S_i) &= p_{11}(t_c + S_i)\nonumber\\
    &= p + (1-p)\mu^{tc + S_i}\\
    \P(\text{Link $i$ is OFF when reached }| S_i) &= p_{10}(t_c + S_i)\nonumber\\
    &= (1-p) - (1-p)\mu^{tc + S_i}.
\end{align}
}

At this stage, it is convenient to use the Moment Generating Function of $S_i$ to compute $\E[S_i]$. We define the Moment Generating Function (MGF) as follows
\begin{equation}
    M_{i}(t) \triangleq \E[\mu^{tS_i}]/\log\mu. \label{eqn:mgf}
\end{equation}
Using the MGF, we can compute all the moments of $S_i$ using the derivatives of $M_i(t)$. In particular, we are interested in the mean, and
\begin{align}
    \E[S_i] = \frac{d}{dt}M_i(t)\bigg|_{t=0}.\label{eqn:mean_and_mgf}
\end{align}
Then,
\begin{align*}
    \E\left[\mu^{tS_i}\right] &= \E\left[\E\left[\mu^{tS_i}|S_{i-1}\right]\right]\\
    &= \E\left[\E\left[\mu^{t(X_i + S_{i-1})}|S_{i-1}\right]\right]\\
    &= \E\left[\mu^{tS_{i-1}}\E\left[\mu^{tX_i}|S_{i-1}\right]\right]
\end{align*}
From the distribution of $X_i$ conditioned on $S_{i-1}$, it follows that
\begin{align}
   \E\left[\mu^{tX_i}|S_{i-1}\right] &= \mu^t(p + (1-p)\mu^{tc + S_i}) \nonumber\\
   &\quad+ \mu^t\E[\mu^{tZ_i}](1-p -(1-p)\mu^{tc + S_i}) \label{eqn:cond_mgf}.
\end{align}
The moment generating function for a geometric random variable is known to be as follows.
\begin{align*}
   \E\left[\mu^{tZ_i}\right] &= \frac{\eps_2\mu^{t}}{1 - (1-\eps_2)\mu^t}.
\end{align*}
On simplifying \eqref{eqn:cond_mgf}, we get, 
\begin{align*}
    \E\left[\mu^{tX_i}|S_{i-1}\right] &= \frac{p(1 - \mu^t)\mu^t + \eps_2\mu^{2t}}{1-(1-\eps_2)\mu^t}\mu^{tS_{i-1}} \nonumber\\
    &\quad + \frac{(1-p)\mu^{t_c + t}(1-\mu^t)}{1-(1-\eps_2)\mu^t}\mu^{(t+1)S_{i-1}}.
\end{align*}
On taking an expectation of the above expression, we get a recursive relation for $M_i(t)$. 
\begin{equation}
    M_{i}(t) = A(t)M_{i-1}(t) + B(t)M_{i-1}(t+1),
 \end{equation}
 where, 
 \begin{align}
     A(t) &= \frac{p(1 - \mu^t)\mu^t + \eps_2\mu^{2t}}{1-(1-\eps_2)\mu^t}\\
     B(t) &= \frac{(1-p)\mu^{t_c + t}(1-\mu^t)}{1-(1-\eps_2)\mu^t}.
 \end{align}
 Finally, $$D_{xy}^{SCPR} \geq \E\left[\sum_{i=0}^{x+y-1}X_i\right] = \frac{d}{dt}M_{x+y}(t)\bigg|_{t=0}.$$


\end{proof}

\section{Analysis of a Distributed Routing Scheme}\label{sec:distributed_routing_analysis}
In this section, we consider the setting where satellites have onboard routing capabilities. In this setting, the satellite nodes know the state of nearby links (local information) in real-time and make real-time routing decisions.  A practical shortest-path distributed routing policy is the greedy routing (GR) policy. The principle of the GR policy is to move closer to the destination at every step, whenever possible. It makes sense to do so as we have only restricted information about the network state. Clearly, the GR policy always takes the shortest path to the destination.  To define the GR policy more formally, let the packet initially be at $(x,y)$ that $x, y > 0$ for the sake of simplicity. Moreover, let $l_h$ and $l_v$ be the links along the horizontal and vertical directions respectively, that move toward the destination, i.e., moving along $l_v$ at $(x,y)$ would take you to $(x,y-1)$, and moving along $l_h$ would take you to $(x-1,y)$. At every node $(x', y')$, repeat the process (see Figure \ref{fig:greedy_policy}). We define the GR policy as follows--
\begin{enumerate}[wide, labelwidth=!, labelindent=0pt]
    \item At some node $(x',y')$, observe the state of the local outgoing links $l_v$ and $l_h$.
    \item One of the following cases occurs--
    \begin{enumerate}
        \item $x' > 0$ and $y' > 0$. We are in the interior of the grid between the source and the destination. In this case, we have two edges that can potentially take us closer to the destination, $l_v$ and $l_h$. We further have four cases.
        \begin{enumerate}
        \item \textit{$l_v$ and $l_h$ are both ON}: Choose to move along $l_v$ with probability $u$ (or along $l_h$ with probability $\bar u = 1-u$).
        \item \textit{$l_v$ is OFF, $l_h$ is ON}: Choose to move along $l_h$.
        \item \textit{$l_h$ is OFF, $l_v$ is ON}: Choose to move along $l_v$.
        \item \textit{$l_v$ and $l_h$ are both OFF}: Depending on whether or not we use onboard buffers, we either buffer the packet or drop it.
    \end{enumerate}
    \item $x' = 0$. In this case, we are at the boundary. Only the $l_h$ link can take us closer to the destination. We move along $l_h$ if it is ON. Else, we drop the packet or buffer it depending on whether the satellites have buffers.
    \item $y' = 0$. This case is similar to the previous case. We move along $l_v$ if it is ON. Else, we drop the packet or buffer it depending on whether the satellites have buffers.
    \end{enumerate}
    \item We repeat this policy until we reach the destination, i.e. $x'=0$ and $y'=0$.
\end{enumerate}
\begin{figure}
    \centering
\includegraphics[width=0.7\linewidth]{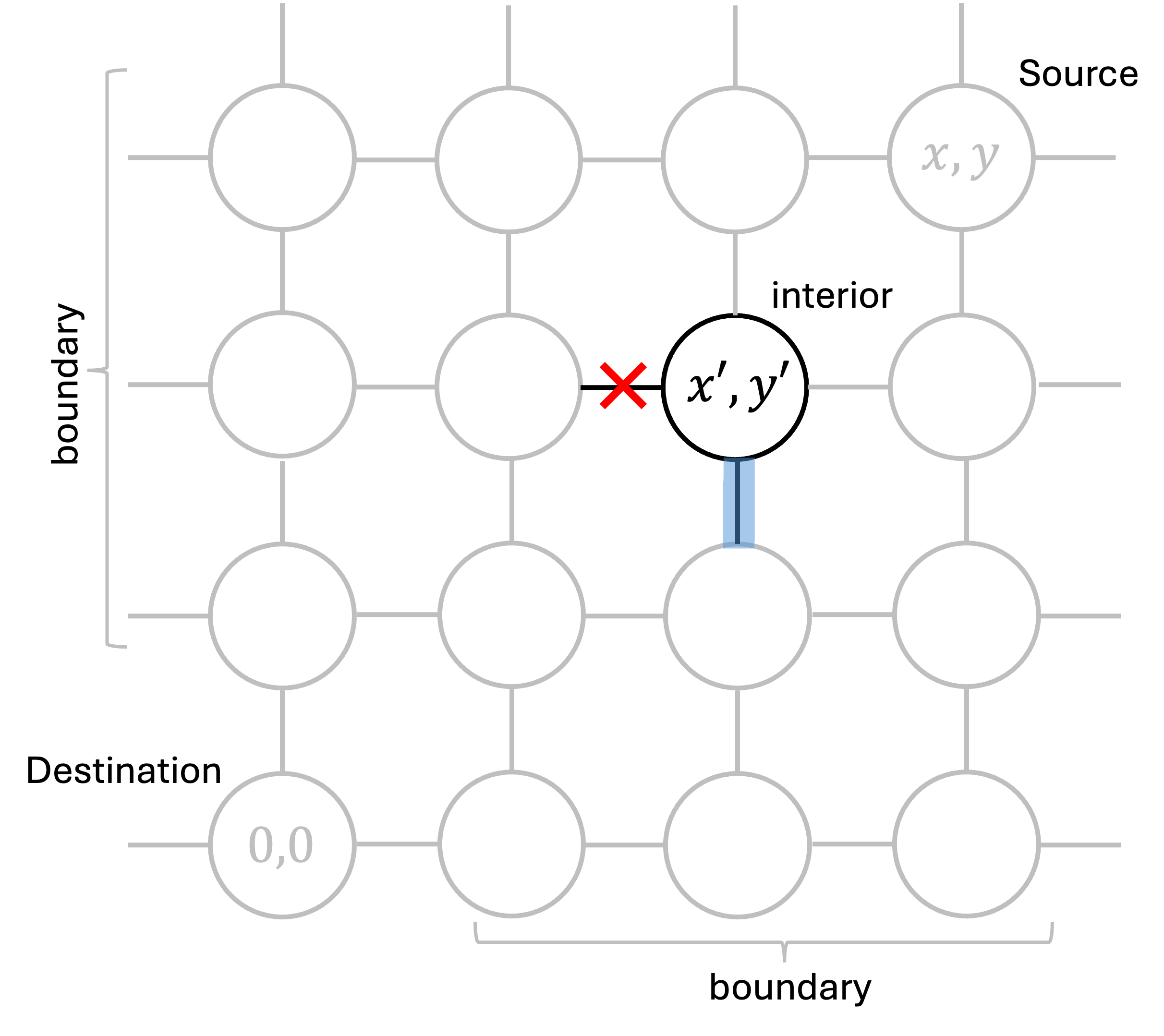}
    \caption{Greedy Routing (GR) Policy at node $(x',y')$. The nodes and links with a dark outline are the ones under consideration. The GR policy only considers the local links going towards the destination for routing. Since the horizontal link is unavailable, the routing decision is to move along the vertical, as highlighted.}
\label{fig:greedy_policy}
\end{figure}

It is important to point that the above algorithm ignores the memory in the network. Every routing decision is made only using the state of the outgoing links at a given node. Therefore, the state of the current outgoing links are \textbf{independent} of all routing decisions made earlier. 
This point is very important to the analysis that follows.

The GR policy is not the optimal distributed policy in general. For the general case, finding the optimal distributed routing problem becomes a complex dynamic program as discussed in the introduction. 
We study the GR policy as a representative distributed routing algorithm to simplify analysis. In the policy defined above, we use randomization to break ties when both links $l_h$ and $l_v$ are ON, i.e., we choose $l_v$ with probability $u$ and $l_h$ with probability $\bar u = 1-u$ to simplify analysis.
In the rest of the section, we analyze the greedy policy's performance.

\subsection{Greedy Routing (GR) Policy Without Buffering}\label{sec:greedy_routing_scenA}
When there are no buffers onboard satellites, we are interested in the throughput of the routing policy. Claim \ref{claim:distributed_throughput} states the expression for the throughput of the greedy policy.
\begin{claim}\label{claim:distributed_throughput}
    The throughput of the Greedy Routing policy is given as
    \begin{equation}\label{eqn:distributed_throughput}
        T^{GR}_{xy} = p^{y}\left(\frac{1-up}{\bar u}\right)^x I_{\bar u p}(x,y) + p^{x}\left(\frac{1-\bar up}{u}\right)^y I_{ u p}(y,x),
    \end{equation}
    where, $u$ is probability of routing along the vertical direction when both links are ON, $\bar u = 1 - u$ and  $I_v(a,b)$ denotes the Regularized Incomplete Beta Function \cite{feller_probability_1991} defined as
\begin{equation}
    I_v(a,b) = \frac{\int_{0}^{v}t^{a-1}(1-t)^{b-1}dt}{\int_{0}^{1}t^{a-1}(1-t)^{b-1}dt}.\label{eqn:I_v(a,b)}
\end{equation}
\end{claim}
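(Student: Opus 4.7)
The plan is to exploit the key observation already flagged in the paper: since link states are independent across edges and GR only queries the two outgoing (toward-destination) links at the current node, each visit to an interior node amounts to an independent draw of two Bernoulli($p$) variables. This reduces the throughput computation to a biased random walk with a drop probability and no memory.

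First, I would compute the per-node transition probabilities. Combining cases (a.i)--(a.iii), at an interior node the packet hops vertically with probability $\alpha:=p(1-\bar u p)$, hops horizontally with probability $\beta:=p(1-u p)$, and is dropped with probability $(1-p)^{2}$; at a boundary node the one available toward-destination link is ON with probability $p$, independently at each node.

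Next, I would decompose the success event by the first axis the packet touches. Since GR is strictly monotone in both coordinates, every successful trajectory either first reaches the $x=0$ axis at some $(0,j)$ with $1\le j\le y$ or first reaches the $y=0$ axis at some $(i,0)$ with $1\le i\le x$, and these two families are disjoint. The interior trajectories ending at $(0,j)$ are in bijection with lattice paths from $(x,y)$ to $(1,j)$ followed by a single horizontal hop; all such paths automatically remain in the interior, so there are $\binom{x+y-j-1}{x-1}$ of them, each carrying interior weight $\alpha^{y-j}\beta^{x}$, after which the $j$ boundary hops succeed with probability $p^{j}$. A symmetric count handles the other axis, giving, after the substitutions $k=y-j$ and $k=x-i$,
\[
T^{GR}_{xy}=\beta^{x}p^{y}\sum_{k=0}^{y-1}\binom{x+k-1}{k}\Bigl(\frac{\alpha}{p}\Bigr)^{k}+\alpha^{y}p^{x}\sum_{k=0}^{x-1}\binom{y+k-1}{k}\Bigl(\frac{\beta}{p}\Bigr)^{k}.
\]

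The final step is to recognize each sum as a regularized incomplete beta function via the negative-binomial representation
\[
I_{v}(a,b)=v^{a}\sum_{k=0}^{b-1}\binom{a+k-1}{k}(1-v)^{k},
\]
applied with $v=\bar u p$ (so $1-v=\alpha/p$, $a=x$, $b=y$) for the first sum and with $v=up$ (so $1-v=\beta/p$, $a=y$, $b=x$) for the second. Substituting back $\alpha=p(1-\bar u p)$ and $\beta=p(1-up)$ collects the factors into exactly the form stated in \eqref{eqn:distributed_throughput}. I expect the last algebraic identification to be the main obstacle: the path-counting is a routine ballot-style bijection, but matching the coefficients $((1-up)/\bar u)^{x}$ and $((1-\bar u p)/u)^{y}$ requires the less-standard negative-binomial form of $I_{v}(a,b)$ rather than its binomial-tail form, together with careful bookkeeping of where the extra factors of $p$ and $\bar u$ (or $u$) end up.
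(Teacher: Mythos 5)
Your proposal is correct and takes essentially the same route as the paper's proof: decomposing successful trajectories by where (equivalently, when) the packet first hits an axis, exploiting the independence of link states from past routing decisions to reduce the problem to counting lattice paths of a memoryless biased walk with per-step drop probability $(1-p)^2$, and identifying the resulting sums as negative-binomial tails, i.e., regularized incomplete beta functions. Your per-step weights $\alpha=p(1-\bar u p)$ and $\beta=p(1-up)$ and path counts $\binom{x+y-j-1}{x-1}$ reproduce exactly the paper's terms $\binom{k-1}{x-1}(1-\bar u p)^{k-x}(1-up)^x p^{x+y}$ under the substitution $k=x+y-j$.
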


In the rest of the subsection, we focus on the proof sketch of the above claim. To that end, let us define the following random variables:
\begin{itemize} 
\label{var_definitions}
    \item Let $\tau_x$ denote the first time that the packet successfully reaches $x'=0$.
    \item Let $\tau_y$ denote the first time that the packet successfully reaches $y'=0$.
    \item Let $E$ be the event that we successfully reach the destination. As the GR policy takes $x + y$ steps to reach the destination, event $E$ is equivalent to the event that there is always at least one outgoing link towards the destination for $x+y$ steps. 
\end{itemize}

We are primarily interested in understanding the distribution of $\min(\tau_x, \tau_y)$. This is because after $\min(\tau_x, \tau_y)$ the packet is routed through the boundary of the grid (where $x'=0$ or $y'=0$). This is important because on the \textbf{boundary} of the grid \textbf{there is only one shortest path to the destination}. If the only link which takes us toward the destination is OFF, we drop the packet. This is contrary to the case in the \textbf{interior}, where we always have \textbf{two potential links toward the destination}. Figure \ref{fig:greedy_policy} shows the interior and boundary regions for the GR policy. Therefore, we want to avoid reaching the boundary early so as to always have the option to take one of the two links that takes us closer to the destination. Geometrically, this means that we want to take a path to the destination that is as close to the diagonal between $(x,y)$ and $(0,0)$ as possible.

\begin{proof}[Proof of Claim \ref{claim:distributed_throughput}]
 Let us assume $y \geq x$ without loss of generality.
 Then note that,
\begin{equation}\label{eqn:prob_breakdown1}
    T^{GR}_{xy} = \P(E) = \sum_{k = x}^{x+y-1}\P\big(\{\min(\tau_x, \tau_y) = k\} \cap E\big),
\end{equation}
since it would take at least $x$ steps to reach the boundary, and no more than $x+y-1$ to reach the boundary.
Now, observe that $\tau_x \neq \tau_y$, since $(0,0)$ can only be reached through $(0,1)$ or $(1,0)$ meaning that $\tau_x > \tau_y$ or $\tau_x < \tau_y$. Therefore,
{
\medmuskip=2mu
\thinmuskip=2mu
\begin{align}\label{eqn:prob_breakdown2}
    &\P(\{\min(\tau_x, \tau_y) = k\} \cap E)\nonumber\\
    &= \begin{cases}
        \P(\{\tau_x > \tau_y = k\} \cap E) + \P(\{\tau_y > \tau_x = k\} \cap E) & k \geq y\\
        \P(\{\tau_y > \tau_x = k\} \cap E) & \text{else.}
    \end{cases}
\end{align}
}
The set $\{\tau_y > \tau_x = k \} \cap E$ describes the event the packet reaches the destination after reaching the boundary in $k$ slots. 
Since the link states at any step are independent of the routing decisions made before it, observe that the probability that we do not drop the packet is 
$$\underbrace{(1-(1-p)^2)^k}_{\parbox{8em}{\footnotesize Prob. that at least one link is ON at each node in the interior for $k$ slots}}\times \underbrace{p^{x+y-k}}_{\parbox{8em}{\footnotesize Prob. that link towards destination is ON for remaining slots along the boundary}} = p^{x+y}(2-p)^k.$$
Additionally, it is mandatory to move along the horizontal in the $k^{th}$ slot. The remaining $x-1$ horizontal steps happen in the previous $k-1$ slots. There are $\binom{k-1}{x-1}$ ways to move horizontally in $k-1$ slots. Conditioned that a move is made, the probability of moving vertically is $w = \frac{up^2 + p(1-p)}{(1-(1-p)^2)} = \frac{1 - (1-u)p}{2-p}$. Consequently, the probability of moving horizontally is $\bar w = 1-w$. 
Finally, after simplifying, it can be shown that,
{\medmuskip=1mu
\thinmuskip=1mu
\begin{align*}
    \P(\{\tau_y > \tau_x = k\} \cap E) &=
    \binom{k-1}{x-1} (1 - \bar up)^{k-x}(1 - up)^x p^{x+y}\\
    \P(\{\tau_x > \tau_y = k\} \cap E) &=
    \binom{k-1}{y-1}(1 - \bar up)^{y}(1 - up)^{k-y} p^{x+y}.
\end{align*}
}
Substituting these back into \eqref{eqn:prob_breakdown1} and \eqref{eqn:prob_breakdown2} and simplifying, we arrive at the following expression. 
\begin{align*}
    T^{GR}_{xy}
    =\;\; &p^{y}\left(\frac{1-up}{\bar u}\right)^x\;\; \underbrace{\sum_{k=0}^{y-1}\binom{k+x-1}{k}(1-\bar up)^{k}(\bar u p)^x}_{(I)}\\
    +\;\;&p^{x}\left(\frac{1-\bar up}{u}\right)^y\;\; \underbrace{\sum_{k=0}^{x-1}\binom{k+y-1}{k}(1-up)^{k}(u p)^y}_{(II)}.
\end{align*}
The combinatorial sum $(I)$ (and $(II)$) is in fact the probability of a Negative Binomial random variable with $x$ (and $y$ resp.) successes, with probability of success being $\bar u p$ (and $up$ resp.) being less than or equal to $y-1$ (and $x-1$ resp.). This probability can be expressed using the Regularized Incomplete Beta Function\cite{feller_probability_1991}, which gives the expression in Claim \ref{claim:distributed_throughput}.
\end{proof}

Observe that the throughput in Claim \ref{claim:distributed_throughput} is not symmetric with respect to $x,y$, and depends on $u$. But from the symmetry of the network, we would expect $T^{GR}_{xy} = T^{GR}_{yx}$.
Moreover, as mentioned earlier, we want to stay in the interior of the grid for as long as possible to increase the possibility of encountering an ON link toward the destination. Choosing $u = \frac{y}{x+y}$ (and $\bar u = \frac{x}{x+y}$) helps satisfy these requirements. Firstly, this would result in staying close to the diagonal. Secondly, \eqref{eqn:distributed_throughput} becomes symmetric as we pick $u$ appropriately depending on $x,y$.  One can also show that $u = \frac{y}{x+y}$ maximizes $T^{GR}_{xy}$, asymptotically as $x\to \infty, y \to \infty$, using large deviation theory.

 \subsection{Greedy Routing (GR) Policy With Buffering}\label{sec:greedy_routing_scenb}
 When the satellites have buffering capabilities, we are interested in the delay performance of the routing scheme. Claim \ref{claim:distributed_delay} gives an approximate formula for the mean delay of the GR Policy with buffers.
 \begin{claim}\label{claim:distributed_delay}
 The mean delay of the GR Policy is upper bounded as
\begin{equation}\label{eqn:greedy_delay}
\begin{aligned}
    D^{GR}_{xy} \leq \;\;\; &(x+y)\left(1 + \frac{(1-p)^2}{2\eps_2 - \eps_1^2}\right)\\
    + &\frac{(1-p)(1-\eps_2+p)}{2\eps_2-\eps^2_2}\sqrt{\frac{x + y}{2\pi w\bar{w}}},
\end{aligned}
\end{equation}
where, $w = \frac{y}{x+y}$ 
and $\bar w = 1-w$. 
\end{claim}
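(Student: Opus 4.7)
The plan is to (i) decompose the expected delay into transmission time ($x+y$ slots) plus stochastic buffering delay, (ii) compute the expected per-hop buffering delay separately for interior and boundary hops using the independence structure of link states, and (iii) upper bound the expected number of boundary hops via a Central Limit Theorem argument on an equivalent iid random walk.

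For steps (i) and (ii), the key observation from the GR analysis is that current outgoing link states are independent of past routing decisions; combined with the assumed independence of links across the grid and the time-stationarity of the Markov chain, each outgoing link is ON with probability $p$ at arrival. Consequently the per-hop buffering delay is independent across the $x+y$ hops and depends only on whether the hop starts at an interior node (two candidate outgoing links toward the destination) or a boundary node (one). At an interior hop, the packet is buffered iff both outgoing links are OFF (probability $(1-p)^2$), in which case the wait until at least one turns ON is geometric with success probability $1-(1-\eps_2)^2 = 2\eps_2-\eps_2^2$, giving expected per-hop time $1 + (1-p)^2/(2\eps_2-\eps_2^2)$. At a boundary hop the analogous expected time is $1 + (1-p)/\eps_2$. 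Letting $N_I$ and $N_B$ denote the numbers of interior and boundary hops, using $N_I + N_B = x+y$ to combine terms yields
\begin{equation*}
D^{GR}_{xy} = (x+y)\left(1 + \tfrac{(1-p)^2}{2\eps_2-\eps_2^2}\right) + \E[N_B]\cdot \tfrac{(1-p)(1-\eps_2+p)}{2\eps_2-\eps_2^2}.
\end{equation*}

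For step (iii), I choose $u$ so that the marginal vertical-move probability per interior hop equals $w = y/(x+y)$, and couple the GR walk (up to hitting the boundary) with an iid Bernoulli($w$) random walk. Letting $\tau_y^0$ and $\tau_x^0$ be the first times this iid walk accumulates $y$ vertical or $x$ horizontal steps, these are negative-binomial distributed with mean $x+y$ and variances $x(x+y)/y$ and $y(x+y)/x$. Using $\max(\tau_y^0, \tau_x^0) \geq x+y$ pointwise together with $\max + \min = \tau_y^0 + \tau_x^0$ and $\E[\tau_y^0]=\E[\tau_x^0]=x+y$,
\begin{equation*}
\E[N_B] = \E[\max(\tau_y^0,\tau_x^0) - (x+y)] = \E[(\tau_y^0 - (x+y))^+] + \E[(\tau_x^0 - (x+y))^+].
\end{equation*}
Applying the Gaussian half-moment approximation $\E[(X - \E X)^+] \approx \sqrt{\var(X)/(2\pi)}$ to each negative-binomial term, summing, and simplifying via $w\bar w = xy/(x+y)^2$ collapses the two square roots into the target bound $\sqrt{(x+y)/(2\pi w\bar w)}$.

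The main obstacle is step (iii): the Gaussian approximation to the half-moment is strictly asymptotic as $x+y \to \infty$, so obtaining a rigorous upper bound for finite $x,y$ requires either a sharp half-moment inequality for the negative binomial carrying the constant $1/\sqrt{2\pi}$, or accepting that the stated inequality holds only up to lower-order corrections in $x+y$ (a Jensen-type bound would give $1/2$ in place of $1/\sqrt{2\pi}$, which is looser). A secondary technicality is the determination of the $u$ that realizes the effective marginal $w=y/(x+y)$: this requires averaging the conditional vertical-move probability over both the immediate-decision case (at least one outgoing link ON at arrival) and the post-wait case (both OFF at arrival, then waiting until the configuration changes), introducing a $u$ depending on $p$ and $\eps_2$ but leaving the leading-order structure of $\E[N_B]$ unchanged.
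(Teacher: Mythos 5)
Your steps (i) and (ii) coincide with the paper's: the same interior/boundary decomposition with per-hop expected delays $1+(1-p)^2/(2\eps_2-\eps_2^2)$ and $1+(1-p)/\eps_2$, combined via Wald's identity (the paper) or equivalently via $N_I+N_B=x+y$ (you), yielding the identical intermediate expression in which $\E[N_B]=x+y-\E[\min(\tau_x,\tau_y)]$ multiplies $\frac{(1-p)(1-\eps_2+p)}{2\eps_2-\eps_2^2}$. You diverge in step (iii). The paper evaluates $\E[\min(\tau_x,\tau_y)]$ exactly as a negative-binomial sum, massages it into a Beta-function expression, specializes to $w=y/(x+y)$ to get $\E[N_B]=\frac{y^{y}x^{x}}{(x+y)^{x+y-1}}\binom{x+y}{x}$, and then applies the exact Stirling-type bound $\binom{x+y}{x}\le\frac{1}{\sqrt{2\pi}}\frac{(x+y)^{x+y+1/2}}{x^{x+1/2}y^{y+1/2}}$, so its $\sqrt{(x+y)/(2\pi w\bar w)}$ is a genuine inequality. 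You instead use the identity $\E[N_B]=\E[(\tau_x^0-(x+y))^+]+\E[(\tau_y^0-(x+y))^+]$ (valid, since at most one of $\tau_x^0,\tau_y^0$ can exceed $x+y$) together with the Gaussian half-moment approximation $\E[(X-\E X)^+]\approx\sqrt{\var(X)/(2\pi)}$. This is a more transparent probabilistic reading of the correction term as a mean absolute deviation of negative binomials, and your variance bookkeeping does collapse to the right constant; but, as you flag yourself, the half-moment step is only asymptotic, so as written your argument establishes the stated upper bound only up to lower-order corrections in $x+y$.

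That gap is closable without changing your architecture: the two half-moments you introduce admit exact closed forms --- their sum is precisely $\frac{y^{y}x^{x}}{(x+y)^{x+y-1}}\binom{x+y}{x}$, the same quantity the paper reaches --- and the upper Stirling bound on the binomial coefficient then converts your $\approx$ into a true $\le$ with the constant $1/\sqrt{2\pi}$ intact. On the secondary technicality you raise about realizing $w=y/(x+y)$: the paper makes this explicit via \eqref{eqn:w_wrt_u} and restricts to the regime \eqref{eqn:shape_condition}, since for sufficiently elongated rectangles no $u\in[0,1]$ achieves that marginal; your proof needs the same caveat. (Both your derivation and the paper's produce $2\eps_2-\eps_2^2$ in the first term; the $\eps_1^2$ in the claim's display appears to be a typo.)
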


In the above expression, $w$ is actually the probability of routing along the vertical direction, and clearly, it depends on $u$, which is the probability of choosing the vertical direction to break ties. 


We now provide the proof of Claim \ref{claim:distributed_delay}. The analysis is similar to the one presented in \cite{basu_percolation_2012}, 
 where they analyze the asymptotic average delay per hop in a  square-grid store-and-forward network with i.i.d. links. Our analysis considers a non-asymptotic non-square grid setting where the links follow Markovian dynamics.

\begin{proof}[Proof of Claim \ref{claim:distributed_delay}]
Let us define $X_i$ to be the delay encountered by the packet in the $i^{th}$ step in the interior of the grid. Then,
\begin{equation}
    X_i = \begin{cases}
        1 & \text{w.p. }1-(1-p)^2\\
        1 + Z_i & \text{w.p. }(1-p)^2
    \end{cases}
\end{equation}
where $Z_i$ is a geometric random variable with parameter $1-(1-\eps_2)^2$, because the packet would have to wait until at least one outgoing link turns ON, out of the two. Let us define $X'_i$ to be the delay encountered by the packet in the $i^{th}$ step at the boundary. Then, 
\begin{equation}
    X'_i = \begin{cases}
        1 & \text{w.p. }p\\
        1 + Z'_i & \text{w.p. }1-p
    \end{cases}
\end{equation}
where $Z'_i$ is a geometric random variable with parameter $\eps_2$, because the packet would have to wait until the only outgoing link turns ON. We are interested in the mean delay to reach the destination, i.e.,

\begin{align}
    D^{GR}_{xy} = \E\left[\sum_{i=0}^{\min(\tau_x, \tau_y)-1}X_i + \sum_{i=\min(\tau_x, \tau_y)}^{x+y-1}X'_i\right],
\end{align}

where $\tau_x, \tau_y $ and $\min(\tau_x, \tau_y)$ are defined as in subsection \ref{var_definitions}. Since $\min(\tau_x, \tau_y)$ is a valid stopping time, we can apply Wald's Identity, to get the following expression.
\begin{align}
    D^{GR}_{xy} &= \E[X_i]\E[\min(\tau_x, \tau_y)]\nonumber\\ 
    &\;\;\;\;+ \E[X'_i](x+y- \E[\min(\tau_x, \tau_y)]).\nonumber
\end{align}
On substituting the values of $\E[X_i]$ and $\E[X'_i]$, we get
\begin{align}\label{eqn:dist_delay_1}
    D^{GR}_{xy}&= x+y + \frac{(1-p)^2}{2\eps_2 - \eps_2^2}\E[\min(\tau_x, \tau_y)]\nonumber\\ &\;\;\;\;+ \frac{1-p}{\eps_2}\big(x+y - \E[\min(\tau_x, \tau_y)]\big).
\end{align}
Since $\frac{(1-p)^2}{2\eps_2 - \eps_2^2} < \frac{1-p}{\eps_2}$, we want $\E[\min(\tau_x,\tau_y)]$ to be as close to $x+y$ as possible, implying that the \textbf{packet must be routed through the interior} of the grid.
Now, we need to compute $\E[\min(\tau_x, \tau_y)]$. To that end, we assume $y \geq x$ without loss of generality. Then,
{
\medmuskip=2mu
\thinmuskip=2mu
\begin{align}\label{eqn:prob_breakdown_delay_1}
    \P(\min&(\tau_x, \tau_y) = k) = \begin{cases}
        \P(\tau_y = k) + \P(\tau_x = k) & k \geq y\\
        \P(\tau_x = k) & \text{else.}
    \end{cases}
\end{align}
}
In the above expression,
\begin{align}
    \P(\tau_y = k) &= \binom{k-1}{y-1}w^{y}\bar w^{k-y}\\
     \P(\tau_x = k) &= \binom{k-1}{x-1}w^{k-x}\bar w^{x},
\end{align}
where $w$ is the probability that the packet moves vertically, and $\bar w = 1-w$ is the probability that the packet moves horizontally. The expression for $w$ depends on $u$, the probability of moving vertically in case of ties, and is given as
\begin{equation}
 w = up^2 + p(1-p) + (1-p)^2\left(u\frac{\eps_2}{2-\eps_2} + \frac{1-\eps_2}{2-\eps_2}\right).\label{eqn:w_wrt_u}   
\end{equation}
Thus, we have the following expression.  
\begin{align*}
\E[\min&(\tau_x,\tau_y)]\\
&= \sum_{k=x}^{x+y-1} k\cdot \P(\min(\tau_x, \tau_y) = k)\\
 &=\sum_{k=x}^{x+y-1} k \cdot \P(\tau_x = k) + \sum_{k=y}^{x+y-1} k \cdot\P(\tau_y = k)\\
&= x \cdot \sum_{k=x}^{x+y-1} \binom{k}{x}w^{k-x}\bar w^{x}+ y \cdot \sum_{k=y}^{x+y-1} \binom{k}{y}w^{y}\bar w^{k-y}.
\end{align*}
On massaging the above expression, we obtain,
\begin{align}
  \E[\min(\tau_x,\tau_y)] = \frac{x}{\bar w}  - \frac{w^{y-1}\bar w^{x-1}}{B(y,x)}+ \left(\frac{y}{w}-\frac{x}{\bar w}\right)I_{w}(x,y), \label{eqn:EK}  
\end{align}
where, $B(a,b)$ is the Beta function\cite{feller_probability_1991} and $I_v(a,b)$ is the Regularized Incomplete Beta function \cite{feller_probability_1991}.  
Further, we note that whenever 
\begin{align}\label{eqn:shape_condition}
    \frac{\min(x,y)}{y+x} \geq (1-p)\left(p + (1-p)\frac{1-\eps_2}{2-\eps_2}\right),
\end{align}
we can choose $u \in [0,1]$ in \eqref{eqn:w_wrt_u} appropriately so that $w = \frac{y}{x+y}$. We do this because setting $w=\frac{y}{x+y}$ implies that \textbf{the packet is routed along the diagonal on average}. On doing so,
\begin{align*}
    \E[\min(\tau_x, \tau_u)] &= (x+y) - \frac{y^{y-1}x^{x-1}}{(x+y)^{x+y-2}}\cdot \frac{\binom{x+y}{x}}{\frac{x+y}{xy}}\\
    &= (x+y) - \frac{y^{y}x^{x}}{(x+y)^{x+y-1}}\cdot \binom{x+y}{x}
\end{align*}
using the properties of the Beta function. Using Stirling's approximation as in \cite{gallager_info_1991}, i.e., $$\frac{1}{\sqrt{8}}\frac{(x+y)^{x+y+1/2}}{x^{x+1/2}y^{y+1/2}}\leq \binom{x+y}{x} \leq \frac{1}{\sqrt{2\pi}}\frac{(x+y)^{x+y+1/2}}{x^{x+1/2}y^{y+1/2}},$$
and on further simplifying, it follows that
\begin{equation}
    \E[\min(\tau_x, \tau_y)] \geq (x+y) - \sqrt{\frac{x+y}{2\pi w \bar w}}.
\end{equation}
Finally, we substitute the expression for $\E[\min(\tau_x, \tau_y)]$ in \eqref{eqn:dist_delay_1} to get the upper bound 
\begin{align*}
D^{GR}_{xy} \leq \;\;\; &(x+y)\left(1 + \frac{(1-p)^2}{2\eps_2 - \eps_1^2}\right)\\
    + &\frac{(1-p)(1-\eps_2+p)}{2\eps_2-\eps^2_2}\sqrt{\frac{x + y}{2\pi w\bar{w}}}.
\end{align*}

\end{proof}
Note that the expected delay scales linearly with the distance between the source and the destination. The last term in the delay is a correction term in the order of $\sqrt{x + y}$.

\section{Comparison Of Routing Strategies and Numerical Analysis}\label{sec:comparison}
Intuitively, a centralized routing policy would fare well when the links have a lot of memory, as the delayed network state information would still be valid after a long time, whereas, a distributed policy would work better in cases where there is no memory and the router needs to be adaptive. We confirm this intuition in this section.
First, we begin with a comparison in the case without buffers, and subsequently compare them in the case with buffers. Additionally, we plot the empirical performance of these algorithms in a $100 \times 100$ network using Monte-Carlo methods over 2000 trials, and show that the simulations match the analytical results.

\subsection{Comparison Of Routing Strategies Without Buffers}
In this subsection, we compare the throughput performance of the SCPR policy and GR policy without buffers. Comparing equations \eqref{eqn:scpr_throughput_approx} and \eqref{eqn:distributed_throughput}, we see that a sufficient condition for the GR policy to be better than the SCPR policy is
\begin{align}
    \left(\frac{1-up}{\bar up}\right)^x I_{\bar up}(x,y) + &\left(\frac{1-\bar up}{up}\right)^y I_{up}(y,x)\nonumber\\
    &\geq \prod_{i=0}^{x+y-1}\left(1 + \frac{1-p}{p}\mu^{t_c + i}\right).
\end{align}
We plot \eqref{eqn:scpr_throughput_approx} and \eqref{eqn:distributed_throughput} and the simulated throughput of the routing algorithms in Figure \ref{fig:throughput}. In the figures, note that the inherent delay $t_c$ is normalized to the time it takes to transmit a packet along one hop in the satellite network. In all the numerical analysis, we fix $p = 0.9$, $N=M=100$. In Figure \ref{fig:thro_mu}, observe that the SCPR policy only outperforms the GR policy when there is high memory in the system $\mu \approx 1$. This is because a connected path selected by the centralized controller is likely to be connected in the future only when there is high memory.  In Figure \ref{fig:thro_tc}, we fix $\mu = 0.99$ and study how SCPR's throughput varies with $t_c$. Observe that throughput of SCPR policy falls exponentially with increasing $t_c$. Even with high memory, GR policy outperforms the SCPR policy when $t_c > 35$. This is because the network is very likely to have changed state, rendering the planned path useless when the inherent delay of the centralized routing scheme is large. Lastly, we compare the throughput of these policies as a function of distance between the source and destination. In Figure \ref{fig:thro_x} we observe that the throughput of both policies falls drastically as the distances become large. In the small distance regime, with high memory and low inherent delay, a centralized policy can be better. However, for large distances, a distributed policy would dominate. 
{
\begin{figure*}[!hbt]
     \centering 
     \captionsetup{justification=centering}
     \begin{subfigure}{0.325\textwidth}
         \centering\includegraphics[width=\textwidth]{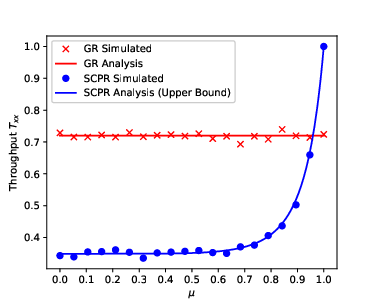}
         \caption{Throughput vs. $\mu$.\\We fix $p = 0.9$, $x=y=5$, $t_c=5$.}
         \label{fig:thro_mu}
     \end{subfigure}
     \hfill
     \begin{subfigure}{0.325\textwidth}
         \centering
         \includegraphics[width=\textwidth]{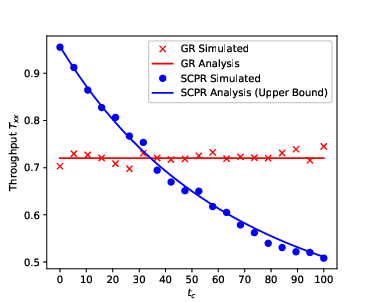}
         \caption{Throughput vs. $t_c$.\\We fix $p = 0.9$, $x=y=5$, $\mu=0.99$.}
         \label{fig:thro_tc}
     \end{subfigure}
     \hfill
     \begin{subfigure}{0.325\textwidth}
         \centering
         \includegraphics[width=\textwidth]{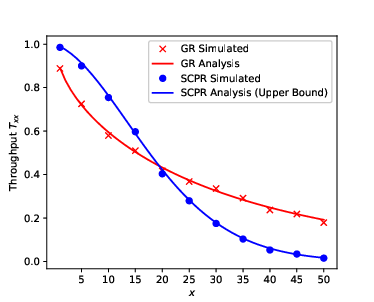}
         \caption{Throughput vs. $x$.\\We fix $p = 0.9$, $x=y$, $\mu=0.99$, $t_c=5$.}
         \label{fig:thro_x}
     \end{subfigure}
        \caption{SCPR and GR throughput against different network parameters}
        \label{fig:throughput}
\end{figure*}
\begin{figure*}[!hbt]
     \centering \captionsetup{justification=centering}
\begin{subfigure}{0.325\textwidth}
         \centering
\includegraphics[width=\textwidth]{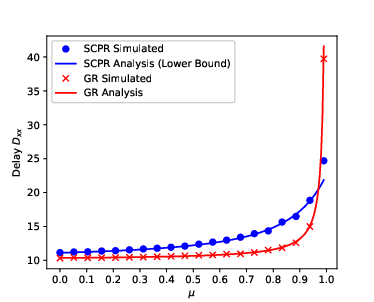}
         \caption{Delay vs. $\mu$.\\We fix $p = 0.9$, $x=y=5$, $t_c=5$.}
         \label{fig:del_mu}
     \end{subfigure}
    \hfill
     \begin{subfigure}{0.325\textwidth}
         \centering
         \includegraphics[width=\textwidth]{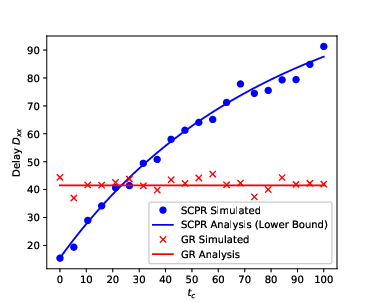}
         \caption{Delay vs. $t_c$.\\We fix $p = 0.9$, $x=y=5$, $\mu=0.99$.}
         \label{fig:del_tc}
     \end{subfigure}
     \hfill
     \begin{subfigure}{0.325\textwidth}
         \centering
    \includegraphics[width=\textwidth]{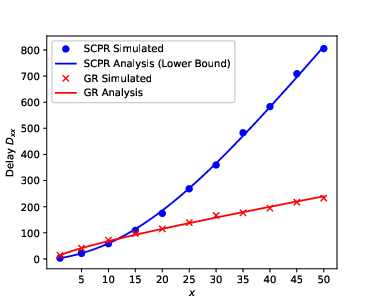}
         \caption{Delay vs. $x$.\\We fix $p = 0.9$, $t_c = 5$, $x=y$, $\mu=0.99$.}
         \label{fig:del_x}
     \end{subfigure}
        \caption{SCPR and GR delay against different network parameters}
        \label{fig:delay}

\end{figure*}
}
From the figures, we see that the simulated performance closely matches the analytical results. The mismatch is largely due to the variance of Monte-Carlo methods.

\subsection{Comparison of Routing Strategies With Buffers}
In this subsection, we compare the delay performance of the SCPR policy and GR policy with buffers. Comparing equations \eqref{eqn:centralized_delay} and \eqref{eqn:greedy_delay}, we see that a sufficient condition for the GR policy to be better than the SCPR policy is
\begin{align}
    (x+y)\left(1 + \frac{(1-p)^2}{2\eps_2 - \eps_2^2} + \frac{\frac{1-p}{\eps_2}\left(1-\frac{1-p}{2-\eps_2}\right)}{\sqrt{2\pi w\bar{w}(x+y)}}\right)\leq M'_{x+y}(0).  \label{eqn:del_suff_cond}
\end{align}
We plot the analytical mean delays, along with simulated delays, in Figure \ref{fig:delay} to better understand their behavior. We fix $p = 0.9$. In Figure \ref{fig:del_mu}, we plot the delay against $\mu$ and we observe that the delay increases with $\mu$ monotonically. Moreover, the GR policy greatly outpeforms the SCPR policy in terms of delay unless the network has high memory $\mu \approx 1$. In Figure \ref{fig:del_tc}, we see how the delay varies with increase in $t_c$ when $\mu = 0.99$. Note that the GR policy outperforms SCPR for $t_c > 32$. Similarly, looking at Figure \ref{fig:del_x} the GR policy is significantly better in terms of delay for large distances, and is comparable at small distances. 
From the figures, we verify that the simulation closely matches the analytical results. 



\section{Conclusions and General Remarks}\label{sec:conclusion}
As we see from the above numerical analysis, the centralized policy is only advantageous when there is high memory in the network links, $t_c$ is low and number of hops to reach the destination is small.
This happens when the time scale of state changes is much larger than the time it takes to transmit a packet across one hop, or $\eps_1 + \eps_2 \ll 1$. An example of this case would be when we want to model link unavailability due to optical link failures. The time it takes to establish an optical link is in the order of minutes which is much larger than the transmit time of a fixed size packet, and the links remain fairly stable once an optical connection is made. In this scenario, it would make sense to use a centralized scheme for route planning.
For other scenarios, it would make more sense to use a distributed policy.  
For example, if we want to model link unavailability due to user priority, the links transition from the ON to the OFF state and vice-versa, at around the same timescale as transmission of a packet along one hop. In this case, a distributed scheme makes more sense. 

We can also use these shortest path routing algorithms (like GR) as subroutines to construct modified policies that outperform the default shortest path routing algorithm (see Appendix \ref{appendix:improved_dist_routing}).

\section*{Acknowledgement}
We would like to thank SES S.A. for funding this research and also providing us with critical feedback in our research. We especially extend our gratitude to Joel Grotz and Valvanera Moreno for their valuable insights and critique throughout the course of this research.
\bibliographystyle{IEEEtran}
\bibliography{references}

\begin{thebibliography}{10}
\providecommand{\url}[1]{#1}
\csname url@samestyle\endcsname
\providecommand{\newblock}{\relax}
\providecommand{\bibinfo}[2]{#2}
\providecommand{\BIBentrySTDinterwordspacing}{\spaceskip=0pt\relax}
\providecommand{\BIBentryALTinterwordstretchfactor}{4}
\providecommand{\BIBentryALTinterwordspacing}{\spaceskip=\fontdimen2\font plus
\BIBentryALTinterwordstretchfactor\fontdimen3\font minus \fontdimen4\font\relax}
\providecommand{\BIBforeignlanguage}[2]{{%
\expandafter\ifx\csname l@#1\endcsname\relax
\typeout{** WARNING: IEEEtran.bst: No hyphenation pattern has been}%
\typeout{** loaded for the language `#1'. Using the pattern for}%
\typeout{** the default language instead.}%
\else
\language=\csname l@#1\endcsname
\fi
#2}}
\providecommand{\BIBdecl}{\relax}
\BIBdecl

\bibitem{fraire_centralized_2021}
J.~A. Fraire and E.~L. Gasparini, ``Centralized and {Decentralized} {Routing} {Solutions} for {Present} and {Future} {Space} {Information} {Networks},'' \emph{IEEE Network}, vol.~35, no.~4, pp. 110--117, Jul. 2021.

\bibitem{ekici_distributed_2001}
E.~Ekici, I.~Akyildiz, and M.~Bender, ``A distributed routing algorithm for datagram traffic in {LEO} satellite networks,'' \emph{IEEE/ACM Transactions on Networking}, vol.~9, no.~2, pp. 137--147, Apr. 2001.

\bibitem{page_distributed_2023}
P.~S. Page, K.~S. Bhargao, H.~V. Baviskar, and G.~S. Kasbekar, ``Distributed {Probabilistic} {Congestion} {Control} in {LEO} {Satellite} {Networks},'' in \emph{2023 15th {International} {Conference} on {COMmunication} {Systems} \& {NETworkS} ({COMSNETS})}, Jan. 2023, pp. 335--339.

\bibitem{liu_low-complexity_2015}
X.~Liu, X.~Yan, Z.~Jiang, C.~Li, and Y.~Yang, ``A {Low}-{Complexity} {Routing} {Algorithm} {Based} on {Load} {Balancing} for {LEO} {Satellite} {Networks},'' in \emph{2015 {IEEE} 82nd {Vehicular} {Technology} {Conference} ({VTC2015}-{Fall})}, Sep. 2015, pp. 1--5.

\bibitem{taleb_sat04-3_2006}
T.~Taleb, D.~Mashimo, A.~Jamalipour, K.~Hashimoto, Y.~Nemoto, and N.~Kato, ``{SAT04}-3: {ELB}: {An} {Explicit} {Load} {Balancing} {Routing} {Protocol} for {Multi}-{Hop} {NGEO} {Satellite} {Constellations},'' in \emph{{IEEE} {Globecom} 2006}, Nov. 2006, pp. 1--5.

\bibitem{deng_distance-based_2023}
X.~Deng, L.~Chang, S.~Zeng, L.~Cai, and J.~Pan, ``Distance-{Based} {Back}-{Pressure} {Routing} for {Load}-{Balancing} {LEO} {Satellite} {Networks},'' \emph{IEEE Transactions on Vehicular Technology}, vol.~72, no.~1, pp. 1240--1253, Jan. 2023.

\bibitem{bertsekas_data_1992}
D.~P. Bertsekas and R.~G. Gallager, \emph{\BIBforeignlanguage{en}{Data networks}}, 2nd~ed., ser. Prentice-{Hall} {International} editions.\hskip 1em plus 0.5em minus 0.4em\relax Englewood Cliffs, NJ: Prentice-Hall International, 1992.

\bibitem{corson_distributed_1995}
M.~S. Corson and A.~Ephremides, ``\BIBforeignlanguage{en}{A distributed routing algorithm for mobile wireless networks},'' \emph{\BIBforeignlanguage{en}{Wireless Networks}}, vol.~1, no.~1, pp. 61--81, Mar. 1995.

\bibitem{perkins_ad-hoc_1999}
C.~Perkins and E.~Royer, ``Ad-hoc on-demand distance vector routing,'' in \emph{Proceedings {WMCSA}'99. {Second} {IEEE} {Workshop} on {Mobile} {Computing} {Systems} and {Applications}}, Feb. 1999, pp. 90--100.

\bibitem{johnson_dynamic_1996}
D.~B. Johnson and D.~A. Maltz, ``\BIBforeignlanguage{en}{Dynamic {Source} {Routing} in {Ad} {Hoc} {Wireless} {Networks}},'' in \emph{\BIBforeignlanguage{en}{Mobile {Computing}}}, T.~Imielinski and H.~F. Korth, Eds.\hskip 1em plus 0.5em minus 0.4em\relax Boston, MA: Springer US, 1996, pp. 153--181.

\bibitem{karp_gpsr_2000}
B.~Karp and H.~T. Kung, ``{GPSR}: {Greedy} {Perimeter} {Stateless} {Routing} for {Wireless} {Networks},'' in \emph{Proceedings of the 6th annual international conference on {Mobile} computing and networking}, ser. {MobiCom} '00.\hskip 1em plus 0.5em minus 0.4em\relax New York, NY, USA: Association for Computing Machinery, Aug. 2000, pp. 243--254.

\bibitem{modiano_efficient_1996}
E.~Modiano and A.~Ephremides, ``Efficient algorithms for performing packet broadcasts in a mesh network,'' \emph{IEEE/ACM Transactions on Networking}, vol.~4, pp. 639--648, Aug. 1996.

\bibitem{werner_dynamic_1997}
M.~Werner, ``A dynamic routing concept for {ATM}-based satellite personal communication networks,'' \emph{IEEE Journal on Selected Areas in Communications}, vol.~15, no.~8, Oct. 1997.

\bibitem{neely_power_alloc_2003}
M.~Neely, E.~Modiano, and C.~Rohrs, ``Power allocation and routing in multibeam satellites with time-varying channels,'' \emph{IEEE/ACM Transactions on Networking}, vol.~11, no.~1, pp. 138--152, 2003.

\bibitem{jun_sun_capacity_2002}
J.~Sun and E.~Modiano, ``\BIBforeignlanguage{en}{Capacity provisioning and failure recovery in mesh-torus networks with application to satellite constellations},'' in \emph{\BIBforeignlanguage{en}{Proceedings {ISCC} 2002 {Seventh} {International} {Symposium} on {Computers} and {Communications}}}.\hskip 1em plus 0.5em minus 0.4em\relax Taormina-Giardini Naxos, Italy: IEEE Comput. Soc, 2002, pp. 77--84.

\bibitem{sun_routing_2004}
------, ``Routing strategies for maximizing throughput in {LEO} satellite networks,'' \emph{IEEE Journal on Selected Areas in Communications}, vol.~22, no.~2, Feb. 2004.

\bibitem{wang_orbit-grid-based_2023}
M.~Wang, L.~Wei, Y.~Wang, and Y.~Liu, ``Orbit-{Grid}-{Based} {Dynamic} {Routing} for {Software} {Defined} {Mega}-{Constellation} {Network},'' in \emph{{GLOBECOM} 2023 - 2023 {IEEE} {Global} {Communications} {Conference}}, Dec. 2023, pp. 844--849, iSSN: 2576-6813.

\bibitem{li_stable_2024}
Y.~Li, L.~Liu, H.~Li, W.~Liu, Y.~Chen, W.~Zhao, J.~Wu, Q.~Wu, J.~Liu, and Z.~Lai, ``Stable {Hierarchical} {Routing} for {Operational} {LEO} {Networks},'' in \emph{Proceedings of the 30th {Annual} {International} {Conference} on {Mobile} {Computing} and {Networking}}.\hskip 1em plus 0.5em minus 0.4em\relax Washington D.C. DC USA: ACM, May 2024, pp. 296--311.

\bibitem{zheng_sdn_2024}
G.~Zheng, N.~Wang, and R.~R. Tafazolli, ``\BIBforeignlanguage{en}{{SDN} in {Space}: {A} {Virtual} {Data}-{Plane} {Addressing} {Scheme} for {Supporting} {LEO} {Satellite} and {Terrestrial} {Networks} {Integration}},'' \emph{\BIBforeignlanguage{en}{IEEE/ACM Transactions on Networking}}, vol.~32, no.~2, pp. 1781--1796, Apr. 2024.

\bibitem{akrida_how_2020_journal}
E.~C. Akrida, G.~B. Mertzios, S.~Nikoletseas, C.~Raptopoulos, P.~G. Spirakis, and V.~Zamaraev, ``How fast can we reach a target vertex in stochastic temporal graphs?'' \emph{Journal of Computer and System Sciences}, vol. 114, pp. 65--83, 2020.

\bibitem{handley_delay_2018}
M.~Handley, ``\BIBforeignlanguage{en}{Delay is {Not} an {Option}: {Low} {Latency} {Routing} in {Space}},'' in \emph{\BIBforeignlanguage{en}{Proceedings of the 17th {ACM} {Workshop} on {Hot} {Topics} in {Networks}}}.\hskip 1em plus 0.5em minus 0.4em\relax Redmond WA USA: ACM, Nov. 2018, pp. 85--91.

\bibitem{johnston_wireless_2018}
M.~Johnston and E.~Modiano, ``\BIBforeignlanguage{en}{Wireless {Scheduling} with {Delayed} {CSI}: {When} {Distributed} {Outperforms} {Centralized}},'' \emph{\BIBforeignlanguage{en}{IEEE Transactions on Mobile Computing}}, vol.~17, no.~11, pp. 2703--2715, Nov. 2018.

\bibitem{zhang_connectivity_2013}
L.~Zhang, L.~Cai, and J.~Pan, ``\BIBforeignlanguage{en}{Connectivity in two-dimensional lattice networks},'' in \emph{\BIBforeignlanguage{en}{2013 {Proceedings} {IEEE} {INFOCOM}}}.\hskip 1em plus 0.5em minus 0.4em\relax Turin, Italy: IEEE, Apr. 2013, pp. 2814--2822.

\bibitem{provan_complexity_1983}
J.~S. Provan and M.~O. Ball, ``\BIBforeignlanguage{en}{The {Complexity} of {Counting} {Cuts} and of {Computing} the {Probability} that a {Graph} is {Connected}},'' \emph{\BIBforeignlanguage{en}{SIAM Journal on Computing}}, vol.~12, no.~4, pp. 777--788, Nov. 1983.

\bibitem{feller_probability_1991}
W.~Feller, \emph{{An} {Introduction} to {Probability} {Theory} and {Its} {Applications}}.\hskip 1em plus 0.5em minus 0.4em\relax Wiley, 1991.

\bibitem{basu_percolation_2012}
P.~Basu, S.~Guha, A.~Swami, and D.~Towsley, ``Percolation phenomena in networks under random dynamics,'' in \emph{2012 {Fourth} {International} {Conference} on {Communication} {Systems} and {Networks} ({COMSNETS} 2012)}, Jan. 2012, pp. 1--10.

\bibitem{gallager_info_1991}
R.~Gallager, \emph{{Information} {Theory} and {Reliable} {Communications}}.\hskip 1em plus 0.5em minus 0.4em\relax Wiley, 1991.

\end{thebibliography}
\appendices



\section{Improved Distributed Routing}
\label{appendix:improved_dist_routing}
To improve throughput or delay of the default shortest path Greedy routing policy from node $(x,y)$ to $(0,0)$ when one dimension is much larger than the other, i.e. $|y| \gg |x|$ or $|x| \gg |y|$, we can forward the packet to an appropriate intermediate node $(u,v)$, which is more in the interior such that:
\begin{enumerate}
    \item For throughput: 
    $$
    T_{|x-u|,|y-v|}^{GR} \times T_{uv}^{GR} \geq T_{xy}^GR,
    $$
    \item For mean delay:
    $$
    D_{|x-u|,|y-v|}^{GR} + D_{uv}^GR \leq D_{xy}^GR,
    $$
\end{enumerate}
which improves performance. Here we use the GR policy as a subroutine to route the packet to the intermediate node $(u,v)$.

\end{document}